\documentclass[a4paper]{article}

\pdfoutput=1

\usepackage[T1]{fontenc}
\usepackage[utf8]{inputenc}
\usepackage[margin=29mm]{geometry}
\usepackage{csquotes}
\usepackage[english]{babel}
\usepackage[style=numeric-comp,sorting=none,giveninits]{biblatex}
\usepackage[font=footnotesize]{caption}
\usepackage{tikz}
\usepackage{pgfplots}
\usepackage{tikzscale}
\usepackage{amsmath}
\usepackage{amsfonts}
\usepackage{amssymb}
\usepackage{amsthm}
\usepackage[affil-it]{authblk}
\usepackage{braket}
\usepackage{bm}
\usepackage{booktabs}
\usepackage{enumitem}
\usepackage{graphicx}
\usepackage{mathrsfs}
\usepackage{mathtools}
\usepackage[binary-units]{siunitx}
\usepackage[referable,flushleft]{threeparttablex}
\usepackage[colorlinks]{hyperref}
\usepackage[capitalise]{cleveref}

\addbibresource{refs.bib}

\usetikzlibrary{matrix,positioning,arrows.meta}
\pgfplotsset{compat=1.14}
\usepgfplotslibrary{fillbetween}

\newtheorem{theorem}{Theorem}
\newtheorem{lemma}[theorem]{Lemma}
\newtheorem{conjecture}[theorem]{Conjecture}
\newtheorem{corollary}[theorem]{Corollary}
\theoremstyle{definition}
\newtheorem{definition}[theorem]{Definition}
\theoremstyle{remark}
\newtheorem*{remark}{Remark}

\Crefname{rule}{Rule}{Rules}

\newcommand{\rectdim}[2]{$#1 \times #2$}
\newcommand{\transpose}{\mathsf{T}}

\title{Quantum Magic Rectangles: Characterization and \\ Application to Certified Randomness Expansion}

\author[ \hspace{-1ex}]{Sean A. Adamson\thanks{\href{mailto:sean.adamson@ed.ac.uk}{\texttt{sean.adamson@ed.ac.uk}}}}
\author[ \hspace{-1ex}]{Petros Wallden\thanks{\href{mailto:petros.wallden@ed.ac.uk}{\texttt{petros.wallden@ed.ac.uk}}}}
\affil[ \hspace{-1ex}]{School of Informatics, University of Edinburgh, \protect\\ 10 Crichton Street, Edinburgh EH8 9AB, United Kingdom}

\date{}

\begin{document}

\maketitle

\begin{abstract}
    We study a generalization of the Mermin--Peres magic square game to arbitrary rectangular dimensions. After exhibiting some general properties, these rectangular games are fully characterized in terms of their optimal win probabilities for quantum strategies. We find that for $m \times n$ rectangular games of dimensions $m,n \geq 3$ there are quantum strategies that win with certainty, while for dimensions $1 \times n$ quantum strategies do not outperform classical strategies. The final case of dimensions $2 \times n$ is richer, and we give upper and lower bounds that both outperform the classical strategies. Finally, we apply our findings to quantum certified randomness expansion to find the noise tolerance and rates for all magic rectangle games. To do this, we use our previous results to obtain the winning probability of games with a distinguished input for which the devices give a deterministic outcome, and follow the analysis of C. A. Miller and Y. Shi [\href{https://doi.org/10.1137/15M1044333}{SIAM J. Comput. \textbf{46}, 1304 (2017)}].
\end{abstract}

\section{Introduction}

Quantum theory has been arguably one of the most successful scientific theories, especially in terms of accuracy of predictions and applications.
We are currently in the midst of the second ``quantum revolution'', where the ability to control quantum systems with great precision has resulted in a new wave of technological applications.
What makes quantum theory unique is the fact that our classical intuition frequently fails, and it has been proven that understanding the foundations of this theory is crucial to fully realize the possibilities it offers.
Quantum nonlocality and contextuality are two such concepts that conflict with our classical intuition, and at the same time enable one of the most interesting applications: that of device-independent cryptographic protocols.
Device-independence, first introduced by \textcite{mayers1998quantum}, is the property that allows parties to achieve cryptographic tasks---from key distribution \cite{vazirani2014fully} to certified randomness expansion \cite{colbeck2011private}, oblivious transfer \cite{kundu2020device}, and secure quantum computation \cite{gheorghiu2015robustness}---without trusting the inner workings of their own devices.

Nonlocality is frequently expressed in terms of ``guessing'' games, in which remote parties that share entanglement try to fulfill a certain winning condition.
Finding the optimal winning strategies for quantum and classical parties in these games is the key to using nonlocality for applications such as device-independent cryptography.
\textcite{mermin1990simple,peres1990incompatible} introduced one such game called the \emph{magic square game} (see details in \cref{sec:magic_square_game}).
This game has a special place in the foundations of quantum theory due to two notable properties.
Firstly, it is one of the simplest examples where quantum strategies can win with certainty (probability one) while classical strategies cannot.
This property is also referred to as \emph{quantum pseudotelepathy} \cite{brassard2005quantum} and can be used to illustrate (strong) contextuality in the spirit of the Kochen--Specker theorem \cite{kochen1975problem}.
Secondly, it is the simplest two-player game where the maximal nonlocality can be demonstrated using only Clifford computations \cite{gottesman1998theory} (preparation of Bell states and Pauli measurements).
In comparison, the CHSH game requires one player to measure in a non-Pauli basis.
The magic square game can, in principle, be used for any of the device-independent cryptographic tasks, and its performance in comparison to other games evaluated case-by-case.
Furthermore, it can be used for efficient self-testing (e.g. \cite{supic2020self}), another exciting concept made possible by nonlocality.
That is, parties can deduce from their purely classical observations the (essentially) exact quantum state they share---a property stronger than simply observing nonclassical correlations.

In this paper, we explore a generalization of the magic square game, the winning probabilities that can be achieved, what qualitative properties are preserved, and how the generalizations can be used in applications.
The specific application we focus upon is certified randomness expansion, while analysis of other device-independent cryptographic primitives is deferred to future publications.

\paragraph{Our contributions.}
We introduce a new class of nonlocal games which we call \emph{magic rectangle} games, characterize their winning probabilities, and apply the results to certified randomness expansion.
Note, however, that the term ``magic rectangle'' has been used differently in the past, to refer to observables arranged into a rectangular array \cite{harvey2008bks,saniga2012finite}.
\begin{itemize}
    \item[--] We define a generalization of the Mermin--Peres magic square game to general rectangular dimensions (\cref{def:magic_rectangle}).
    \item[--] We fully characterize the optimal winning probabilities for quantum behaviors of all these magic rectangle games (\cref{thm:characterization}).
    \item[--] In order to achieve this characterization, we first prove a number of general properties, showing that the optimal winning probabilities for any set of behaviors (local, quantum, almost quantum, or nonsignaling) are:
    (i) the same for all games of the same dimension,
    (ii) symmetric with respect to row/column exchange, and
    (iii) monotonically increasing with the dimension of the rectangle.
    \item[--] Using the known fact that the regular magic square game (which is a special case of \rectdim{3}{3} magic rectangle games) can be won for quantum strategies with certainty, we reduce the full characterization of magic rectangles to that of \rectdim{1}{n} and \rectdim{2}{n} games (\cref{thm:chara_dims}).
    We also show that the CHSH game, according to our definitions, is a \rectdim{2}{2} magic rectangle game (\cref{thm:2x2_chsh_equiv}).
    We then obtain the optimal winning probabilities for the \rectdim{1}{n} case, while we lower and upper bound the winning probabilities for \rectdim{2}{n} games.
    To upper bound the probabilities, we conjecture the almost quantum winning probability based on numerical evidence.
    As a side result, we get that \rectdim{2}{n} games with $n\geq3$ can be won with certainty using behaviors at level 1 of the NPA hierarchy (and so exhibit a version of ``pseudotelepathy''), while the quantum and almost quantum sets both give winning probabilities strictly smaller than unity (thus not exhibiting pseudotelepathy).
    \item[--] Finally, we use this characterization to analyze certified randomness expansion from magic rectangle games.
    Specifically, we show that the winning probability of an \rectdim{m}{n} game with a distinguished input (with deterministic outcomes) can be obtained from the \rectdim{(m-1)}{(n-1)} game (\cref{thm:distinguished_bound}).
    This, along with the results of \cref{thm:characterization}, allows us to determine the noise tolerance (robustness) of each of these games.
    We then follow the analysis of \textcite{miller2017universal} to get rates for certified randomness expansion using different magic rectangle games (see \cref{tab:randomness_expansion_performance}).
\end{itemize}

\paragraph{Related works.}
The magic square game was introduced by \textcite{mermin1990simple,peres1990incompatible}, while \textcite{cabello2001bell,cabello2001all} and subsequently \textcite{aravind2002bell} stated it as a two-player nonlocal game.
\textcite{aravind2004quantum} gives a nontechnical demonstration of the Mermin--Peres magic square game.
The term quantum \emph{pseudotelepathy} was first introduced by \textcite{brassard2003multi}, and the magic square game, along with many others that share the property that there exist perfect quantum (but not classical) strategies, were reviewed in \cite{brassard2005quantum}.
There are a number of generalizations of the magic square that have been considered in literature.
\textcite{cleve2014characterization} analyze quantum strategies for ``binary constraint'' games---a general class of games that contains the magic rectangles we define---and give some (weaker than our analysis) upper bounds on winning probabilities from quantum strategies.
\textcite{arkhipov2012extending} generalized the magic square and magic pentagram games to be played on hypergraphs called \emph{arrangements}, and characterized which arrangements can exhibit quantum pseudotelepathy.
\textcite{coladangelo2017robust} considered ``linear constraint'' games, focusing on the uniqueness of winning quantum strategies in order to use such games for self-testing.

To determine optimal quantum strategies, it is important to be able to check if a given experimental behavior admits a quantum model/realization.
This question is directly linked with the question of the ``degree of nonlocality'' present in quantum theory.
\textcite{navascues2007bounding,navascues2008convergent} addressed this by giving an infinite hierarchy of conditions that are satisfied by quantum behaviors, known as the NPA hierarchy.
\textcite{navascues2015almost} defined the \emph{almost quantum} set of behaviors, which is the set closest to the quantum set that arises in a ``natural'' way and is easy to check.
Sets of behaviors that are easy to handle and include the quantum set, as is the case for the levels of the NPA hierarchy and the almost quantum set, have been used successfully to bound the winning probabilities of quantum parties in many cryptographic settings---something we also exploit in this work.

Certified randomness expansion was first introduced by \textcite{colbeck2011private}.
\textcite{vazirani2012certifiable} demonstrated quantum security for an exponential expansion protocol.
Subsequently, \textcite{miller2016robust} additionally obtained cryptographic security and robustness.
\textcite{acin2016certified} reviewed efforts to design device-independent quantum random number generators (up to 2016), and included a comparison of the main protocols.
\textcite{miller2017universal} give the spot-checking protocol that we use for our analysis of certified randomness expansion, and to obtain bounds on expansion rates.
Finally, \textcite{arnon2019simple,brown2020framework} detail alternative techniques, which give better rates for the spot-checking protocol by using the entropy accumulation theorem \cite{dupuis2020entropy,dupuis2019entropy}.
These are more involved and case-specific than \cite{miller2017universal} and, thus, to give a general analysis of certified randomness for all magic rectangle games, we use \cite{miller2017universal} in our work.
Note, however, that the noise tolerance we obtain for the different magic rectangle games does not depend on the specific technique used to bound the rates, and thus applies in general.

\paragraph{Organization of the paper.}
In \cref{sec:preliminaries}, we give some background on the magic square game and different levels of correlations.
In \cref{sec:magic_rectangle_games}, we define magic rectangle games, and in \cref{sec:prelim_results} give some general results for these games.
\Cref{sec:characterization} gives a full characterization of the winning probabilities of magic rectangle games.
We then apply our results to certified randomness expansion in \cref{sec:app_randomness}, and conclude in \cref{sec:conclusion} where we discuss our results and give future directions.

\section{Preliminaries}
\label{sec:preliminaries}

\subsection{The magic square game}
\label{sec:magic_square_game}

The Mermin--Peres magic square game \cite{aravind2004quantum} consists of two players, Alice and Bob, who are not allowed to communicate during each round of the game.
This could be achieved, for example, by ensuring a spacelike separation between the two players.
Each round consists of Alice and Bob respectively being assigned a row and column of an empty \rectdim{3}{3} table uniformly at random, which they must fill according to the rules:
\begin{enumerate}[label=S\arabic*.,ref=S\arabic*]
    \item \label[rule]{rule:stand_plus_minus} Each filled cell must belong to the set $\{+1, -1\}$.
    \item \label[rule]{rule:stand_alice_prod} Rows must contain an even number of negative entries (i.e., the product of Alice's entries to any assigned row must be $+1$).
    \item \label[rule]{rule:stand_bob_prod} Columns must contain an odd number of negative entries (i.e., the product of Bob's entries to any assigned column must be $-1$).
\end{enumerate}
Neither player has knowledge of which row or column the other has been assigned, and nor does either player know what values the other has entered.
The game is won if both players enter the same value into the cell shared by their row and column.
It is clear that the optimal classical strategy succeeds with probability $8/9$ only \cite{brassard2005quantum}, and may be achieved by both players agreeing to each follow a particular configuration for their entire table before the game begins.
Strikingly, if the players are allowed to share an entangled quantum state, it is possible for them to win the magic square game with certainty \cite{mermin1990simple,peres1990incompatible}.
Such games are said to exhibit quantum \emph{pseudotelepathy} \cite{brassard2005quantum}, setting them apart from many other nonlocal games (including the CHSH game) for which optimal quantum strategies are not guaranteed to win.

A possible quantum winning strategy for the magic square allows the players to share the entangled state
\begin{equation}
\label{eq:double_bell_state}
    \ket{\Psi} = \ket{\Phi^{+}}_{1,2} \otimes \ket{\Phi^{+}}_{3,4} ,
\end{equation}
which is the product of two maximally entangled two-qubit Bell states
\begin{equation}
    \ket{\Phi^{+}}_{a,b} \equiv \frac{\ket{0}_{a} \otimes \ket{0}_{b} + \ket{1}_{a} \otimes \ket{1}_{b}}{\sqrt{2}} .
\end{equation}
That is, Alice's quantum system is composed of qubits 1 and 3, and Bob's system of qubits 2 and 4.
Depending on which row and column are assigned, the players make measurements on their respective quantum systems according to the observables given in the corresponding cells of \cref{fig:3x3_strategy}.
The outcomes of these determine the values which Alice and Bob should enter into their respective row and column to win with certainty.
\begin{figure}[hbt]
    \centering
    \includegraphics{figures/figure1.tikz}
    \caption{
        A quantum strategy for the magic square game, in which the players share the entangled state $\ket{\Psi}$ given in \cref{eq:double_bell_state}.
        Observables $X$, $Y$, and $Z$ are the Pauli spin operators, and $I$ is the identity operator.
        Measurements of Alice correspond to a row, and those of Bob to a column.
        Each row is formed of mutually commuting observables whose product is equal to $I$, and each column of mutually commuting observables whose product is $-I$.
        The eigenvalues of each observable are $+1$ and $-1$.
        These facts combined show \cref{rule:stand_plus_minus,rule:stand_alice_prod,rule:stand_bob_prod} are automatically satisfied.
        Moreover, if $O_{A}$ is any of the given observables for Alice's system, and $O_{B}$ is the corresponding observable for Bob's system, the correlation $\bra{\Psi} O_{A} O_{B} \ket{\Psi} = 1$ guarantees the players always win.
        }
    \label{fig:3x3_strategy}
\end{figure}

\Cref{fig:3x3_strategy} shows that, unlike for the CHSH game, optimal quantum strategies for the magic square game can be implemented by performing measurements of the Pauli group only.

\subsection{Levels of correlations}

We consider local measurements made on a system shared by two observers, Alice and Bob (multipartite generalizations exist, however, we will only focus on two parties, since it is the setting we consider in this work).
Alice chooses an input $x \in \mathcal{X}$ and observes a corresponding measurement output $a \in \mathcal{A}_{x}$.
Similarly, Bob chooses an input $y \in \mathcal{Y}$ and observes a measurement output $b \in \mathcal{B}_{y}$.
We may implicitly assume that inputs for Alice and Bob are distinguishable from one another, and that each output is labeled by its corresponding input.
Hence, we may write the sets of all possible outputs for Alice and Bob respectively as the disjoint unions $\mathcal{A} = \bigcup_{x \in \mathcal{X}} \mathcal{A}_{x}$ and $\mathcal{B} = \bigcup_{y \in \mathcal{Y}} \mathcal{B}_{y}$.
We refer to a fixed configuration of all probabilities $P(a,b \mid x,y)$ as a \emph{behavior}.
These behaviors can also be thought of as vectors in $\mathbb{R}^{\lvert\mathcal{A}\times\mathcal{B}\rvert}$, a convention that is particularly useful for dealing with classes of behaviors that are then mapped to sets of vectors.

Behaviors can be characterized according to properties they have, or according to what physical theories can give rise to such behaviors.
The weakest condition (and thus the most general set of behaviors) one typically imposes is that ``signaling'' should be forbidden; behaviors should not allow for superluminal communication.
A behavior is said to exhibit \emph{nonsignaling} correlations \cite{masanes2006general} if it satisfies both $P(a \mid x) =  P(a \mid x,y)$ and $P(b \mid y) = P(b \mid x,y)$, i.e., the input of one party does not influence the probability of outcomes for the other party.
Similarly, a behavior exhibits \emph{quantum} correlations if it is realizable under the laws of quantum mechanics, meaning that there exists a joint state $\ket{\psi}$ and ``local'' measurement operators $[E_x^a,E_y^b]=0$ that reproduce the behavior, i.e.,  such that $P(a,b \mid x,y)=\bra{\psi}E^a_x E^b_y\ket{\psi}$.
A behavior exhibits \emph{classical} correlations if there exists a unique joint probability distribution such that the behavior arises as marginals.
By a theorem of Fine \cite{fine1982hidden}, this also implies that classical behaviors are local.
We denote the sets of nonsignaling, quantum, and local behaviors by $N$, $Q$, and $L$, respectively.

Given a behavior, it is not easy to check whether there exists a corresponding quantum model (and thus whether the behavior belongs to $Q$).
\textcite{navascues2007bounding,navascues2008convergent}, in order to characterize the set of quantum behaviors, defined an infinite decreasing hierarchy of nonsignaling correlations (known as the NPA hierarchy).
These levels of correlations are intermediate; they are weaker than nonsignaling correlations, but stronger than the quantum set.
The different sets of behaviors in the NPA hierarchy are denoted by $Q_{1} \supseteq Q_{2} \supseteq \dots$, and converge to the quantum set in the sense that $\bigcap_{i \geq 1} Q_{i} = Q$.
Each set $Q_{i}$ can be certified by a different semidefinite program.

A further important set of supra-quantum behaviors are the \emph{almost quantum} correlations \cite{navascues2015almost}, which we denote $\tilde{Q} \supsetneq Q$.
It has been argued that this set is special, as it is the smallest set that contains the quantum set and arises naturally from some information theoretic principle (e.g. local orthogonality \cite{fritz2013local}, nontrivial communication complexity \cite{brassard2006limit}, etc.).
These correlations arise naturally by weakening a single one of the principles defining quantum correlations.
Namely, instead of requiring the local measurement operators to commute, one only requires that they commute when acting on the special state that gives the behavior, i.e. $[E_x^a,E_y^b]\ket{\psi}=0$.
It is shown in \cite{navascues2015almost} that $\tilde{Q} = Q_{1+AB}$, where $Q_{1+AB}$ is a set of correlations defined in \cite{navascues2008convergent} and satisfying $Q_{1} \supsetneq Q_{1+AB} \supsetneq Q_{2}$ in the NPA hierarchy.

Overall, the above correlations satisfy the inclusions
\begin{equation}
\label{eq:npa_inclusions}
    N \supsetneq Q_{1} \supsetneq Q_{1+AB} = \tilde{Q} \supsetneq Q_{2} \supseteq \dots \supseteq Q \supsetneq L .
\end{equation}
Here, it is worth stressing that the win probabilities in any game can only increase when considering a larger set of behaviors.
It follows that to (upper or lower) bound the win probabilities for players of a nonlocal game in one level, one can use other levels of correlations that are easier to deal with.
In this work, we will mainly be concerned with the nonsignaling, almost quantum, quantum, and local levels of correlations $N$, $\tilde{Q}$, $Q$, and $L$ respectively, where the almost quantum set is used to upper bound the win probabilities for quantum behaviors.

\section{Magic rectangle games: Definition}
\label{sec:magic_rectangle_games}

More generally than in \cref{sec:magic_square_game}, it is possible to construct similar games for arbitrary sizes of magic square; a magic square game with $m$ possible questions for Alice and $n$ for Bob corresponds to an \rectdim{m}{n} table.
Indeed, this may be more appropriately named a magic \emph{rectangle}.
In order to avoid trivially winning classical strategies, we must also generalize the game rules.
\begin{definition}[Magic rectangle games]
\label{def:magic_rectangle}
    We specify an \rectdim{m}{n} game by fixing some $\alpha_{1},\dots,\alpha_{m}$ and $\beta_{1},\dots,\beta_{n}$ each belonging to $\{+1, -1\}$, such that their product satisfies
    \begin{equation}
        \label{eq:prod_rule}
        \alpha_{1} \dots \alpha_{m} \cdot \beta_{1} \dots \beta_{n} = -1 .
    \end{equation}
    The rules of the given game are then:
    \begin{enumerate}[label=R\arabic*.,ref=R\arabic*]
        \item \label[rule]{rule:plus_minus} Each filled cell must belong to the set $\{+1, -1\}$.
        \item \label[rule]{rule:alice_prod} Upon being assigned the $i$th row, the product of Alice's entries must be $\alpha_{i}$.
        \item \label[rule]{rule:bob_prod} Upon being assigned the $j$th column, the product of Bob's entries must be $\beta_{j}$.
    \end{enumerate}
    As before, the game is won if both players enter the same value into their shared cell.
\end{definition}

Notice that the standard \rectdim{3}{3} magic square game described in \cref{sec:magic_square_game} is simply the special case where $\alpha_{1} = \alpha_{2} = \alpha_{3} = 1$ and $\beta_{1} = \beta_{2} = \beta_{3} = -1$.
In fact, there are $2^{m+n+1}$ different specifications of \rectdim{m}{n} games allowed by \cref{eq:prod_rule}.

The requirement of \cref{eq:prod_rule} ensures that no deterministic classical strategy that wins with certainty can exist.
In such a strategy, definite values would be assigned to each cell of the table which the players must both follow.
The product of all cells would be $\alpha_{1} \dots \alpha_{m}$ when calculated according to the rows, and $\beta_{1} \dots \beta_{n}$ according to the columns, but \cref{eq:prod_rule} is exactly the statement that these products are not equal.
Hence, the optimal classical success rate is at most $1 - (mn)^{-1}$.
In fact, this success rate is attainable deterministically by Alice and Bob answering according to fixed (but different) tables satisfying \cref{rule:plus_minus,rule:alice_prod,rule:bob_prod}, since such tables can always be constructed which differ in only a single one of their cells (Alice's table need not consider \cref{rule:bob_prod} and Bob's table need not consider \cref{rule:alice_prod}).
We denote this optimal classical success rate for our \rectdim{m}{n} magic rectangle games by
\begin{equation}
\label{eq:mxn_classical_value}
    \omega_{L}(m,n) = 1 - \frac{1}{mn} .
\end{equation}

Let us introduce some further notation to describe our magic rectangle games.
We will let $X$ and $Y$ be uniformly distributed random variables taking values in the alphabets $\mathcal{X} = \{1,\dots,m\}$ and $\mathcal{Y} = \{1,\dots,n\}$, respectively, labeling the possible input rows and columns that may be assigned to Alice and Bob.
We will denote the possible output rows of Alice and columns of Bob by the random vectors $\bm{A} = (A_{1},\dots,A_{n})$ and $\bm{B} = (B_{1},\dots,B_{m})^{\transpose}$ with alphabets $\mathcal{A}$ and $\mathcal{B}$, respectively, where each $A_{j}$ and $B_{i}$ takes values in $\{+1, -1\}$.
Referring to \cref{rule:plus_minus,rule:alice_prod,rule:bob_prod}, the event that the \rectdim{m}{n} magic rectangle game is won upon input $(X,Y) = (x,y)$ is given by
\begin{equation}
\label{eq:wincond}
    W_{x,y}^{m,n} \equiv (A_{y} = B_{x}) \cap \Bigg( \prod_{j=1}^{n} A_{j} = \alpha_{x} \Bigg) \cap \Bigg( \prod_{i=1}^{m} B_{i} = \beta_{y} \Bigg) .
\end{equation}

Perhaps more naturally for the games we consider, we can equivalently let $\mathscr{A}$ and $\mathscr{B}$ denote alphabets of the possible question/answer pairs for Alice and Bob allowed by the rules of \cref{def:magic_rectangle}.
To illustrate why this is the natural choice, we point out that Alice returning a string of $\pm 1$'s that is not compatible with \cref{rule:alice_prod} is equally forbidden with her returning the value $5$ for one cell, and thus it is the natural choice to exclude such outcomes from the alphabet altogether. This is mathematically expressed as
\begin{subequations}
\begin{align}
    \mathscr{A} & = \bigg\{ (x,\bm{a}) \in \mathcal{X}\times\mathcal{A} : \prod_{j} a_{j} = \alpha_{x} \bigg\} , \\
    \mathscr{B} & = \bigg\{ (y,\bm{b}) \in \mathcal{Y}\times\mathcal{B} : \prod_{i} b_{i} = \alpha_{y} \bigg\} .
\end{align}
\end{subequations}
Then, with $(X, \bm{A})$ and $(Y, \bm{B})$ instead taking values in alphabets $\mathscr{A}$ and $\mathscr{B}$, respectively, the winning event upon input $(X,Y) = (x,y)$ becomes simply
\begin{equation}
\label{eq:wincond_simple}
    A_{y} = B_{x} .
\end{equation}
We will refer to these $\mathscr{A}$ and $\mathscr{B}$ as the \emph{natural} alphabets of a magic rectangle game.

In what follows, we characterize the different sizes of magic rectangle games in terms of their optimal win probabilities and strategies, under different levels of allowed nonsignaling correlations (notably quantum, almost quantum, and general nonsignaling correlations).
We will often suppress the numerical values $+1$ and $-1$ to the symbols $+$ and $-$ for simplicity.

\section{Properties of magic rectangle games}
\label{sec:prelim_results}

To begin our characterization of the magic rectangle games of \cref{def:magic_rectangle}, we first show some general properties of these games, which allow us to narrow the considerations required for a full characterization.

\Cref{lem:value_unique} shows in what sense it is possible to identify games of the same dimension together.
\Cref{cor:value_unique} then shows that for magic rectangle games of a given dimension \rectdim{m}{n}, all choices of specific values for parameters $\alpha_{1},\dots,\alpha_{m}$ and $\beta_{1},\dots,\beta_{n}$ satisfying \cref{eq:prod_rule} yield the same optimal win probability at a given level of allowed correlations $\Sigma$.
We unambiguously refer to this value as $\omega_{\Sigma}(m,n)$ and show in \cref{cor:value_symmetric} the symmetry $\omega_{\Sigma}(m,n) = \omega_{\Sigma}(n,m)$.
We show in \cref{cor:value_increasing} that $\omega_{\Sigma}(m,n)$ is independently increasing in both $m$ and $n$ (with an explicit lower bound given in \cref{lem:value_lower_bound} in terms of that for smaller magic rectangle games).
Finally, the correlation hierarchy of \cref{eq:npa_inclusions} implies for any particular game
\begin{equation}
\label{eq:npa_winprob_ineq}
    \omega_{N} \geq \omega_{1} \geq \omega_{1+AB} \geq \omega_{2} \geq \dots \geq \omega_{Q} \geq \omega_{L} .
\end{equation}
Combining these facts leads us to the path we will take towards a characterization, as stated in the following theorem.
\begin{theorem}
\label{thm:chara_dims}
    In order to fully characterize quantum (or stronger) optimal strategies for magic rectangle games of arbitrary dimension, it is sufficient to consider only \rectdim{1}{n} games, \rectdim{2}{n} games with $n \geq 2$, and \rectdim{3}{3} games.
    Moreover, only a single example game for each different dimension need be considered.
\end{theorem}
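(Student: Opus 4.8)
The plan is to obtain \cref{thm:chara_dims} as a bookkeeping consequence of the structural results announced above, using the single known fact that the ordinary magic square game admits a perfect quantum strategy, i.e.\ $\omega_{Q}(3,3)=1$. What must be shown is that knowing $\omega_{\Sigma}(m,n)$ — and an optimal strategy — for the three listed families determines the same data for every dimension $(m,n)$ and every level of allowed correlations $\Sigma$ from the quantum set up to the nonsignaling set. The ``Moreover'' clause is immediate from \cref{cor:value_unique}: at any such $\Sigma$ the optimal win probability $\omega_{\Sigma}(m,n)$ depends only on the dimension and not on the signs $\alpha_{1},\dots,\alpha_{m},\beta_{1},\dots,\beta_{n}$ obeying \cref{eq:prod_rule}; and the explicit correspondence underlying \cref{lem:value_unique} carries an optimal strategy for one game to an optimal strategy for any other of the same dimension. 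Hence one representative game per dimension suffices.

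Next I would use the symmetry $\omega_{\Sigma}(m,n)=\omega_{\Sigma}(n,m)$ of \cref{cor:value_symmetric}, together with the transpose correspondence of strategies, to assume without loss of generality that $m\le n$, and then split on $m$. If $m=1$ the game is in the $1\times n$ family. If $m=2$ then $n\ge m=2$, so the game is a $2\times n$ game with $n\ge2$ (which by \cref{thm:2x2_chsh_equiv} includes the CHSH game at $n=2$); the only dimension left out of this family, namely $2\times 1$, is by symmetry a $1\times 2$ game already covered. The remaining case is $m\ge3$, whence $n\ge m\ge3$. Here the hierarchy \cref{eq:npa_winprob_ineq} gives $1=\omega_{Q}(3,3)\le\omega_{\Sigma}(3,3)\le1$ for every $\Sigma$ between $Q$ and $N$, so $\omega_{\Sigma}(3,3)=1$, and the monotonicity of \cref{cor:value_increasing} in each argument yields $1=\omega_{\Sigma}(3,3)\le\omega_{\Sigma}(m,n)\le1$. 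Concretely, the explicit bound of \cref{lem:value_lower_bound} upgrades a perfect strategy for the $3\times 3$ game to a perfect strategy for the $m\times n$ game, so the entire family $\{(m,n):m,n\ge3\}$ is characterized once its $3\times 3$ representative is. This exhausts all dimensions.

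I do not expect a genuine obstacle: the theorem is a repackaging of the preceding lemmas and corollaries, and its proof is essentially the three-way case split above. The only two points needing any care are both mild. First, the pseudotelepathy input is the statement $\omega_{Q}(3,3)=1$ about the quantum set \emph{specifically}, so one must note that it propagates up the hierarchy — trivially, since $1$ is the maximal win probability — in order to conclude $\omega_{\Sigma}(3,3)=1$ for all stronger $\Sigma$. Second, to justify that the reduction covers the characterization of \emph{strategies} and not merely of values, one must check that \cref{lem:value_unique,cor:value_symmetric,lem:value_lower_bound} are proved constructively, i.e.\ that they transport explicit strategies rather than only bound probabilities; this is how they are stated, so the reduction goes through verbatim for strategies as well.
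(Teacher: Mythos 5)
Your proposal is correct and follows essentially the same route as the paper: the ``Moreover'' clause via \cref{lem:value_unique} and \cref{cor:value_unique}, the reduction of $n \times 1$ and $n \times 2$ cases via \cref{lem:value_symmetric} and \cref{cor:value_symmetric}, and the disposal of all $m,n \geq 3$ cases by lifting the perfect $3 \times 3$ quantum strategy through \cref{cor:value_increasing} (with \cref{lem:value_lower_bound} giving the explicit construction) and propagating $\omega_{Q}=1$ up the hierarchy of \cref{eq:npa_winprob_ineq}. The two points you flag as needing care are exactly the ones the paper handles, so no gap remains.
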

\begin{proof}
    Postponed until the end of this section, after we have shown some general properties of magic rectangle games.
\end{proof}

\begin{definition}[Equivalence of games]
\label{def:game_equiv}
    We will call two games $G$ and $G^{\prime}$ \emph{equivalent}, and write $G \sim G^{\prime}$, if there exist bijections $f \colon \mathscr{A} \to \mathscr{A}^{\prime}$ and $g \colon \mathscr{B} \to \mathscr{B}^{\prime}$ taking the natural alphabets of $G$ to those of $G^{\prime}$, such that the winning events are equal. That is, such that $(X^{\prime},A^{\prime}) = f(X,A)$ and $(Y^{\prime},B^{\prime}) = g(Y,B)$ imply $W = W^{\prime}$, where $W$ and $W^{\prime}$ are the events that each game is won.
\end{definition}
\begin{remark}
    Under \cref{def:game_equiv}, given a fixed allowed level for correlations, all equivalent games have the same optimal win probability; strategies are identified with others of equal win probabilities.
\end{remark}

\begin{lemma}
\label{lem:binary_string}
    Let $b,b^{\prime} \in \{0,1\}^{n}$ be binary sequences of length $n \geq 2$ with the same parity (that is their Hamming weights are either both odd or both even).
    Consider the operations $\varphi_{i,j}$ on binary sequences, which have the effect of flipping the bits in both the $i$th and $j$th positions.
    Then, there exists an involutory composition of these operations $\varphi = \varphi_{i_{m},j_{m}} \circ\dots\circ \varphi_{i_{1},j_{1}}$ such that $b^{\prime} = \varphi(b)$.
\end{lemma}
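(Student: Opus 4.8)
The plan is to prove this by induction on the Hamming distance $d = d_H(b, b')$ between the two sequences. Since $b$ and $b'$ have the same parity, their symmetric difference—the set of positions where they differ—has even cardinality, so $d$ is even. Write $d = 2k$ and induct on $k$. The base case $k = 0$ is trivial: $b = b'$, so take $\varphi$ to be the empty composition (the identity), which is vacuously involutory.

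For the inductive step, suppose $d = 2k$ with $k \geq 1$, and let $i < j$ be two of the positions at which $b$ and $b'$ disagree (these exist since $d \geq 2$). Apply $\varphi_{i,j}$ to $b$: this flips exactly the $i$th and $j$th bits, each of which was wrong, so the resulting sequence $\tilde{b} = \varphi_{i,j}(b)$ now agrees with $b'$ in both positions $i$ and $j$, and agrees with $b$ (hence still with $b'$ or not, as before) everywhere else. Thus $d_H(\tilde{b}, b') = 2k - 2 = 2(k-1)$. Note $\tilde{b}$ and $b'$ still have the same parity (flipping two bits preserves parity), so the induction hypothesis applies: there is an involutory composition $\psi = \varphi_{i_{m-1}, j_{m-1}} \circ \cdots \circ \varphi_{i_1, j_1}$ with $b' = \psi(\tilde{b})$. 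Then $\varphi = \psi \circ \varphi_{i,j}$ satisfies $b' = \varphi(b)$.

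The one point requiring a small argument is that $\varphi$ so constructed is involutory, i.e.\ $\varphi \circ \varphi = \mathrm{id}$. Each individual $\varphi_{i,j}$ is an involution, and moreover any two of these operations commute (flipping bits in two possibly-overlapping pairs of positions; bit-flips at distinct positions commute, and a double flip at the same position is the identity), so the group they generate is abelian—in fact an elementary abelian $2$-group, a subgroup of $(\mathbb{Z}/2)^n$. Since $\psi$ is a composition of such operations it is a product of commuting involutions, hence an involution, and since it commutes with $\varphi_{i,j}$ we get $\varphi \circ \varphi = \psi \varphi_{i,j} \psi \varphi_{i,j} = \psi^2 \varphi_{i,j}^2 = \mathrm{id}$. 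I would state this commutativity observation up front as the structural fact that makes the induction go through cleanly, rather than re-deriving it at each step.

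I do not anticipate a genuine obstacle here; the only thing to be careful about is the bookkeeping that $\varphi_{i,j}$ applied to $b$ really does reduce the distance by exactly two (it cannot accidentally introduce new disagreements, since $i$ and $j$ were chosen among the existing disagreements), and the verification that the composition remains involutory. An alternative, non-inductive phrasing is also available: the map $b \mapsto \varphi_{i,j}(b)$ corresponds to adding the weight-two vector $e_i + e_j$ in $(\mathbb{Z}/2)^n$, and the set of such weight-two vectors spans the even-weight subspace; since $b \oplus b'$ lies in that subspace, it is a sum of weight-two vectors, and any such sum can be realized as a composition of the $\varphi_{i,j}$'s, which is automatically involutory because we are in characteristic $2$. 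I would present the inductive version as the main proof since it also makes the "involutory" claim most transparent, but mention the linear-algebraic viewpoint as a remark if space permits.
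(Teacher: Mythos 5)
Your proof is correct. It differs mildly from the paper's argument in how the composition is built: the paper takes each sequence to a canonical form depending only on its parity (greedily cancelling pairs of $1$'s, then moving a lone $1$ to the first position), and then concatenates the path $b \to \text{canonical form} \to b'$, using invertibility of the operations to traverse the second leg backwards; you instead work directly on the symmetric difference of $b$ and $b'$, inducting on the (necessarily even) Hamming distance and removing one pair of disagreements per step. Your route is slightly more economical, since it never needs a normal form or the inversion step, and it makes the distance-reduction bookkeeping explicit; the paper's route has the minor virtue of exhibiting the parity classes via their canonical representatives. Both proofs obtain the involutory property from the same structural fact, which you rightly isolate up front: the $\varphi_{i,j}$ are commuting involutions (translations by weight-two vectors in $(\mathbb{Z}/2)^{n}$), so any composition of them squares to the identity. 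Your closing linear-algebraic remark---that the weight-two vectors span the even-weight subspace and $b \oplus b'$ lies in it---is a valid compressed restatement of the same argument.
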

\begin{proof}
    Starting with a binary sequence, we can apply operations $\varphi_{i,j}$ one-by-one in the following way:
    if there are two or more $1$'s in the sequence, apply the operation which replaces two of the $1$'s with $0$'s.
    If the initial binary sequence had even parity, repeating this process will eventually yield the sequence of zeros.
    Else, we will eventually have exactly one nonzero element in position $k$ of the sequence.
    If it is not already the case, we can apply $\varphi_{1,k}$ to take this to the sequence with exactly one nonzero element occurring in the first position.
    Hence, we can apply a sequence of these operations, taking each binary sequence to a canonical form depending only on its parity.
    Since each operation $\varphi_{i,j}$ is involutory, and the operations commute, any sequence of these operations is also involutory and thus invertible.
    Therefore we may apply some sequence of the operations $\varphi_{i_{m},j_{m}} \circ\dots\circ \varphi_{i_{1},j_{1}}$ taking $b$ to its canonical form, and from its canonical form to $b^{\prime}$.
\end{proof}

\begin{lemma}
\label{lem:value_unique}
    Let $G$ be an \rectdim{m}{n} magic rectangle game specified by the parameters $\alpha_{1},\dots,\alpha_{m}$ and $\beta_{1},\dots,\beta_{n}$ satisfying \cref{eq:prod_rule}, and let $G^{\prime}$ be a magic rectangle game of identical dimension specified by $\alpha_{1}^{\prime},\dots,\alpha_{m}^{\prime}$ and $\beta_{1}^{\prime},\dots,\beta_{n}^{\prime}$ also satisfying \cref{eq:prod_rule}.
    Then $G \sim G^{\prime}$ and, moreover, there exists an involution $F$ on the set of \rectdim{m}{n} games such that $G^{\prime} = F(G)$.
\end{lemma}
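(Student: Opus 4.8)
The plan is to construct the equivalence explicitly at the level of the natural alphabets by ``flipping signs'' of rows and columns, and to arrange these flips so that the resulting transformation is an involution on the set of $m \times n$ games. First I would reduce the general case to a sequence of elementary moves. Define, for each $i \in \mathcal{X}$, the elementary operation $\rho_i$ which negates the parameter $\alpha_i$ and simultaneously, in the alphabet $\mathscr{A}$, negates one fixed entry of every Alice answer string in row $i$ (say the first entry, $a_1 \mapsto -a_1$). Dually, define $\kappa_j$ for $j \in \mathcal{Y}$ acting on $\beta_j$ and on one fixed entry of every Bob answer string in column $j$. Each $\rho_i$ and $\kappa_j$ is clearly an involution on the alphabet it modifies, and it sends games satisfying \cref{eq:prod_rule} to games satisfying \cref{eq:prod_rule}, since it flips exactly one of the $\alpha$'s or one of the $\beta$'s — wait, that breaks the product rule. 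The correct elementary move must flip \emph{two} of the parameters at once (like the $\varphi_{i,j}$ of \cref{lem:binary_string}): e.g. $\rho_{i,i'}$ negates both $\alpha_i$ and $\alpha_{i'}$ and flips the first entry of every row-$i$ and every row-$i'$ answer string; similarly $\kappa_{j,j'}$; and one ``mixed'' move $\mu_{i,j}$ negating $\alpha_i$ and $\beta_j$ together while flipping the appropriate cell entries. These are the building blocks.

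Next I would check two things about each elementary move. (1) It is an involution on the relevant natural alphabet, and the induced map on answer strings is a bijection $\mathscr{A} \to \mathscr{A}'$ (resp. $\mathscr{B} \to \mathscr{B}'$) onto the game with the modified parameters: this is immediate, since negating a fixed entry of every string in a given row is a self-inverse bijection, and the product constraint $\prod_j a_j = \alpha_i$ is preserved precisely because we flipped both the entry and $\alpha_i$. (2) The winning event is unchanged. For the pure moves $\rho_{i,i'}$ and $\kappa_{j,j'}$ this is clear because we only ever touch entries in a fixed column (resp. row) position, and the shared-cell comparison $A_y = B_x$ in \cref{eq:wincond_simple} is affected only when that fixed position coincides with the queried one; choosing the ``flipped'' position consistently (the first column for $\rho$, the first row for $\kappa$) and tracking the parity bookkeeping shows the comparison outcome is preserved. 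The mixed move $\mu_{i,j}$ needs the same care; here I would be slightly more careful about which cell entries get flipped so that $A_y = B_x$ is genuinely invariant, possibly flipping the $(i,j)$ cell on both sides. This is the step I expect to be the main obstacle: getting the bookkeeping of which entries flip on the Alice side versus the Bob side exactly right so that the shared-cell equality is literally preserved, not merely preserved up to a global sign.

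Then I would invoke \cref{lem:binary_string}. Since $G$ and $G'$ have the same dimension and both satisfy \cref{eq:prod_rule}, the sign-pattern $(\alpha_1,\dots,\alpha_m,\beta_1,\dots,\beta_n) \in \{\pm1\}^{m+n}$ of $G$ and that of $G'$ have the same overall product $-1$, hence, viewed as binary strings, the same parity. By \cref{lem:binary_string} there is an involutory composition of two-bit flips $\varphi = \varphi_{k_\ell, l_\ell}\circ\cdots\circ\varphi_{k_1,l_1}$ carrying the pattern of $G$ to that of $G'$. Realizing each two-bit flip by the corresponding elementary game move ($\rho$, $\kappa$, or $\mu$ according to whether the two flipped bits are both among the $\alpha$'s, both among the $\beta$'s, or one of each), we obtain $F = F_\ell \circ \cdots \circ F_1$ with $F(G) = G'$. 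Finally, because the elementary moves all commute (they act on disjoint fixed entry-positions, or their effect on a shared position is an independent sign flip) and each is an involution, their composition $F$ is again an involution on the set of $m \times n$ games; and each $F_k$ supplies bijections of natural alphabets preserving the winning event, so their composition gives the bijections $f,g$ witnessing $G \sim G'$ in the sense of \cref{def:game_equiv}. This establishes both claims: $G \sim G'$, and $G' = F(G)$ for an involution $F$.
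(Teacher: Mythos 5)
Your overall route is the same as the paper's: reduce to two-bit flips of the parameter string via \cref{lem:binary_string} and realize each flip by sign changes at the level of the natural alphabets. But the construction of the elementary moves has a genuine gap, and it is exactly the point you flag as ``the main obstacle''. Your pure moves $\rho_{i,i'}$ and $\kappa_{j,j'}$, as defined, modify only one player's alphabet (Alice flips the first entry of her row-$i$ and row-$i'$ strings; Bob's alphabet is untouched). When the queried cell is $(i,1)$, Alice's compared entry $A_1$ flips but Bob's $B_i$ does not, so the winning event $A_y = B_x$ of \cref{eq:wincond_simple} is \emph{not} preserved; no parity bookkeeping on Alice's side alone can fix this, because the comparison is between the two players.

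The resolution is to take only the mixed move as primitive and to define it so that \emph{both} players flip the $(i,j)$ cell: $F_{i,j}$ negates $\alpha_i$ and $\beta_j$, Alice negates $A_j$ exactly when her input is $X=i$, and Bob negates $B_i$ exactly when his input is $Y=j$. Then for any input pair $(x,y)$ the two compared entries either both flip (when $(x,y)=(i,j)$) or both stay fixed, so $A_y=B_x$ is invariant, and each player's product constraint is preserved because the flipped entry and the flipped parameter sit in the same row (resp.\ column). The pure pair-flips you wanted are then not primitives but compositions: $F_{i_2,j}\circ F_{i_1,j}$ flips $\alpha_{i_1}$ and $\alpha_{i_2}$ while $\beta_j$ is flipped twice and restored, and similarly for pairs of $\beta$'s. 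With that single correction, the rest of your argument (equal parity of the two parameter strings, \cref{lem:binary_string}, commutativity and involutivity of the composition, and the induced bijections of natural alphabets) goes through as you describe and coincides with the paper's proof.
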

\begin{proof}
    Consider the operations $F_{i,j}$ which act on a game with parameters $\alpha_{1},\dots,\alpha_{m}$ and $\beta_{1},\dots,\beta_{n}$ to produce an identical game with exception that the sign of both $\alpha_{i}$ and $\beta_{j}$ have been flipped (this is a valid game as \cref{eq:prod_rule} is still satisfied).
    Correspondingly, let $f_{i,j}$ and $g_{i,j}$ act on the natural alphabets of the game to produce identical alphabets with the exceptions that each player changes the sign of their output corresponding to the $(i,j)$th cell of the table.
    That is, $f_{i,j}(X,A)$ differs from $(X,A)$ in that Alice flips the sign of $A_{j}$ if her input is $X=i$; similarly, in $g_{i,j}(Y,B)$, Bob flips the sign of $B_{i}$ if his input is $Y=j$.
    Upon applying $F_{i,j}$ to a game, the corresponding functions $f_{i,j}$ and $g_{i,j}$ leave the winning event \cref{eq:wincond_simple} unchanged for all possible inputs.
    Moreover, the $f_{i,j}$ and $g_{i,j}$ are bijective when considered as maps to the natural alphabets of the game produced by $F_{i,j}$.
    Hence, $F_{i,j}$ takes games to equivalent games.
    We will now show that we can apply some sequence of these operations $F = F_{i_{k},j_{k}} \circ\dots\circ F_{i_{1},j_{1}}$ such that $G^{\prime} = F(G)$.
    Transitivity of $\sim$ then shows the desired equivalence.

    Consider the parameters of $G$ as a binary sequence $b = (\alpha_{1},\dots,\alpha_{m}, \beta_{1},\dots,\beta_{n})$ containing an odd number of negative elements.
    The operation $F_{i,j}$ applied to $G$ acts to flip the sign of $\alpha_{i}$ and $\beta_{j}$.
    Furthermore, we can always construct an operation $F_{i_{2},j} \circ F_{i_{1},j}$ which flips the sign of $\alpha_{i_{1}}$ and $\alpha_{i_{2}}$, and similarly an operation $F_{i,j_{2}} \circ F_{i,j_{1}}$ which flips the sign of $\beta_{j_{1}}$ and $\beta_{j_{2}}$.
    Thus, by applying a sequence of these operations to $G$, we can flip the sign of any pair of its parameters in $b$.
    Therefore applying \cref{lem:binary_string} shows the existence of a sequence of these operations $F = F_{i_{k},j_{k}} \circ\dots\circ F_{i_{1},j_{1}}$ such that the game $F(G)$ has parameters given by the binary sequence (also containing an odd number of negative elements) $b^{\prime} = (\alpha_{1}^{\prime},\dots,\alpha_{m}^{\prime}, \beta_{1}^{\prime},\dots,\beta_{n}^{\prime})$.
    That is, $G^{\prime} = F(G)$.
    Finally, since the $F_{i,j}$ are involutory and commute with one another, $F$ is involutory.
\end{proof}

\begin{corollary}
\label{cor:value_unique}
    Given a fixed correlation level $\Sigma$, all magic rectangle games of dimension \rectdim{m}{n} have equal optimal win probability, which we denote $\omega_{\Sigma}(m,n)$.
\end{corollary}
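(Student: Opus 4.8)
The plan is to obtain this as a direct consequence of \cref{lem:value_unique} together with the remark following \cref{def:game_equiv}. By \cref{lem:value_unique}, any two $m \times n$ magic rectangle games $G$ and $G'$ (with parameters satisfying \cref{eq:prod_rule}) are equivalent, $G \sim G'$, via bijections $f$ and $g$ of the natural alphabets that carry the winning event of $G$ exactly onto that of $G'$. It therefore suffices to check that equivalent games have the same optimal win probability at a fixed correlation level $\Sigma$ --- the assertion of that remark --- so the real substance of the argument is to make this precise for each $\Sigma \in \{L, Q, \tilde{Q}, Q_k, N\}$.

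The key observation is that the bijections produced in \cref{lem:value_unique} are built from the elementary maps $f_{i,j}$ and $g_{i,j}$, each of which is a \emph{local} relabeling of outputs: $f_{i,j}$ alters only Alice's output, and only when Alice's input is $i$, and symmetrically for $g_{i,j}$ and Bob. Consequently $f$ fixes Alice's input and acts, for each $x$, as a bijection of her admissible outputs, and likewise for $g$; composing a behavior $P(\bm{a}, \bm{b} \mid x, y)$ of $G$ with these relabelings yields a behavior $P'$ of $G'$ with $P'(\bm{a}', \bm{b}' \mid x, y) = P(\bm{a}, \bm{b} \mid x, y)$ whenever $f(x, \bm{a}) = (x, \bm{a}')$ and $g(y, \bm{b}) = (y, \bm{b}')$. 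Such a deterministic, input-conditioned, invertible post-processing of outputs preserves membership in every set in the hierarchy \cref{eq:npa_inclusions}: for $N$ it is immediate from the nonsignaling conditions; for $L$ one relabels the supporting deterministic assignments; and for $Q$, the NPA levels $Q_k$, and $\tilde{Q} = Q_{1+AB}$ one keeps the same state and merely re-indexes the local measurement operators (replacing $E_x^{\bm{a}}$ by the operator indexed by $f(x, \bm{a})$), which leaves untouched the positivity, completeness, and commutation relations defining the level. Hence $P \mapsto P'$ is a bijection between the $\Sigma$-behaviors of $G$ and those of $G'$.

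Since $f$ and $g$ carry the winning event of $G$ onto that of $G'$, and the inputs are uniform on $\mathcal{X} \times \mathcal{Y}$ for both games, corresponding behaviors win with identical probability; taking suprema over $\Sigma$-behaviors on both sides gives the same optimal value for $G$ and for $G'$. As $G$ and $G'$ were arbitrary $m \times n$ games, the optimal win probability depends only on $m$, $n$, and $\Sigma$, and we denote it $\omega_{\Sigma}(m,n)$. The only genuinely non-formal point --- and the one I would be most careful about --- is the verification that input-conditioned output relabelings remain inside each correlation set; this is routine for $N$ and $L$, and for the quantum-type sets it is just the standard fact that relabeling the outcomes of a measurement produces another valid measurement without disturbing the algebraic structure that characterizes the level.
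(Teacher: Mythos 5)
Your proposal is correct and follows essentially the same route as the paper: the paper's proof is simply to invoke \cref{lem:value_unique} together with the remark after \cref{def:game_equiv} that equivalent games have equal optimal win probability at any fixed correlation level. The extra work you do---verifying that the input-conditioned output relabelings $f_{i,j}$, $g_{i,j}$ preserve membership in each of $L$, $Q$, $Q_k$, $\tilde{Q}$, and $N$---is exactly the (routine) content the paper leaves implicit in that remark, so it is a welcome elaboration rather than a different argument.
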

\begin{proof}
    $G$ and $G^{\prime}$ in \cref{lem:value_unique} are arbitrary \rectdim{m}{n} games, and so all games of a fixed dimension are equivalent, and must have equal optimal win probabilities.
\end{proof}

\begin{definition}[Transpose game]
    We define the \emph{transpose} of an \rectdim{m}{n} game $G$ (with parameters $\alpha_{1},\dots,\alpha_{m}$ and $\beta_{1},\dots,\beta_{n}$), denoted by $G^{\transpose}$, to be the \rectdim{n}{m} game specified by the parameters $\alpha_{i}^{\transpose} = \beta_{i}$ and $\beta_{j}^{\transpose} = \alpha_{j}$ for all $i \in \{1,\dots,n\}$ and $j \in \{1,\dots,m\}$.
\end{definition}

\begin{lemma}
\label{lem:transpose_strat}
    Let $G$ be an \rectdim{m}{n} magic rectangle game, and fix an allowed level $\Sigma$ for correlations.
    If $S_{\Sigma}$ is a strategy for G which wins with probability $p$, then there exists an involution $T$ between strategies, such that the strategy $S_{\Sigma}^{\transpose} \equiv T(S_{\Sigma})$ for the transpose game $G^{\transpose}$ also wins with probability $p$.
\end{lemma}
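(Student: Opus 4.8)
The plan is to realize the transpose strategy by literally interchanging the roles of Alice and Bob. Recall that $G^{\transpose}$ is an \rectdim{n}{m} game in which Alice receives an input $x' \in \{1,\dots,n\}$ and must return a vector whose entries multiply to $\alpha^{\transpose}_{x'} = \beta_{x'}$ --- exactly the constraint on Bob's output in $G$ --- while Bob receives $y' \in \{1,\dots,m\}$ and must return a vector multiplying to $\beta^{\transpose}_{y'} = \alpha_{y'}$ --- exactly Alice's constraint in $G$. Accordingly I would define $T(S_{\Sigma})$ to be the strategy for $G^{\transpose}$ in which Alice runs Bob's part of $S_{\Sigma}$ (treating her input $x'$ as Bob's column index in $G$) and Bob runs Alice's part of $S_{\Sigma}$ (treating his input $y'$ as Alice's row index in $G$). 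At the level of behaviors this is the relabeling $P'(\bm{a}',\bm{b}' \mid x',y') \coloneqq P(\bm{b}',\bm{a}' \mid y',x')$, where now $\bm{a}'$ has length $m$ and $\bm{b}'$ has length $n$; the product constraints defining the natural alphabets $\mathscr{A}^{\transpose},\mathscr{B}^{\transpose}$ of $G^{\transpose}$ are then satisfied by construction.

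Next I would check that winning is preserved input-by-input. Under the identification above, Alice's output in $G^{\transpose}$ is Bob's output $\bm{B}$ from $G$ run on input $y = x'$, and Bob's output in $G^{\transpose}$ is Alice's output $\bm{A}$ from $G$ run on input $x = y'$. Evaluating the winning event \cref{eq:wincond} for $G^{\transpose}$ on input $(x',y')$, the condition $A'_{y'} = B'_{x'}$ becomes $B_{y'} = A_{x'}$, and the two product constraints of $G^{\transpose}$ become exactly the two product constraints of $G$; that is, the winning event for $G^{\transpose}$ on $(x',y')$ under $T(S_{\Sigma})$ coincides with the winning event $W^{m,n}_{y',x'}$ for $G$ on the swapped input $(x,y) = (y',x')$ under $S_{\Sigma}$. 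Since $X$ and $Y$ are uniform, the map $(x',y') \mapsto (y',x')$ is a measure-preserving bijection from the input distribution of $G^{\transpose}$ to that of $G$, so the overall win probability of $T(S_{\Sigma})$ equals that of $S_{\Sigma}$, namely $p$.

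I would then argue that $T(S_{\Sigma})$ lies in the same correlation level $\Sigma$, and that $T$ is an involution. The first point is immediate: the classes local, quantum (in the commuting-operator formulation used here, where $[E^a_x, E^b_y] = 0$), almost quantum (where $[E^a_x, E^b_y]\ket{\psi} = 0$), and nonsignaling are all manifestly symmetric under exchanging the labels of the two parties --- swapping Alice's and Bob's measurement operators while keeping the same shared state $\ket{\psi}$ preserves both commutation conditions, and swapping the two arguments of a nonsignaling or local behavior yields another behavior of the same kind. For the second point, $(G^{\transpose})^{\transpose} = G$, and applying the role-swap twice returns each party to its original part, so $T \circ T$ is the identity; hence $T$ is an involution on the set consisting of the strategies for $G$ together with those for $G^{\transpose}$.

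I expect the only real work to be the index bookkeeping: keeping straight which output vector has length $m$ and which length $n$, and which coordinate is tested in the winning event, so that \cref{eq:wincond} for $G^{\transpose}$ correctly unwinds to \cref{eq:wincond} for $G$ with the inputs transposed. There is no conceptual obstacle --- the lemma is essentially the observation that the magic-rectangle winning condition \cref{eq:wincond_simple} is symmetric under simultaneously transposing the table, transposing the parameters, and exchanging the two players.
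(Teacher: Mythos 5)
Your proposal is correct and takes essentially the same route as the paper: the paper's proof also defines $T$ as the role-swap in which Alice plays Bob's part of $S_{\Sigma}$ (outputting his columns as rows) and vice versa, noting that this is involutory and preserves the winning probability. Your additional checks --- that the behavior relabeling respects the natural alphabets, that the correlation level $\Sigma$ is closed under party exchange, and that the uniform input distribution makes the input swap measure-preserving --- simply spell out details the paper leaves implicit.
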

\begin{proof}
    We let $T$ be the map which exchanges the roles of the players in a strategy, so that Bob's former strategy is now played by Alice, and vice versa.
    In particular, under the action of $T$, Alice in the transpose strategy $S_{\Sigma}^{\transpose}$ outputs Bob's columns of the strategy $S_{\Sigma}$ as rows.
    Similarly, Bob in $S_{\Sigma}^{\transpose}$ outputs Alice's rows of $S_{\Sigma}$ as columns.
    Such a $T$ is clearly involutory, and preserves the probability assigned to the winning event for magic rectangle games.
\end{proof}

\begin{lemma}
\label{lem:value_symmetric}
    Let $G$ be an \rectdim{m}{n} magic rectangle game, and let $G^{\prime}$ be an \rectdim{n}{m} magic rectangle game.
    Fix an allowed level $\Sigma$ for correlations.
    If $S_{\Sigma}$ is a strategy for $G$ which wins with probability $p$, then there exists a bijection $f$ between strategies such that the strategy $S_{\Sigma}^{\prime} = f(S_{\Sigma})$ for $G^{\prime}$ also wins with probability $p$.
\end{lemma}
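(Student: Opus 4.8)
The plan is to obtain $f$ by composing the two constructions already available: the transpose map of \cref{lem:transpose_strat}, and the game-equivalence bijection supplied by \cref{lem:value_unique}. The key observation is that the transpose game $G^{\transpose}$ has dimensions \rectdim{n}{m}, exactly the same as $G^{\prime}$, so $G^{\transpose}$ and $G^{\prime}$ fall under the scope of \cref{lem:value_unique}.

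First I would apply \cref{lem:transpose_strat} to the strategy $S_{\Sigma}$ for $G$, producing the strategy $S_{\Sigma}^{\transpose} = T(S_{\Sigma})$ for $G^{\transpose}$, which wins with the same probability $p$ and where $T$ is an involution (in particular a bijection) between strategies for $G$ and strategies for $G^{\transpose}$. Next, since $G^{\transpose}$ and $G^{\prime}$ are both \rectdim{n}{m} games, \cref{lem:value_unique} gives $G^{\prime} \sim G^{\transpose}$, witnessed (via \cref{def:game_equiv}) by bijections on the natural alphabets under which the winning events coincide. As noted in the remark following \cref{def:game_equiv}, such a pair of alphabet bijections induces a bijection $h$ from strategies for $G^{\transpose}$ to strategies for $G^{\prime}$ that preserves the allowed correlation level $\Sigma$ and the win probability: relabelling inputs and outputs by fixed bijections is an affine, membership-preserving operation on each of $N$, $\tilde{Q}$, $Q$, $L$ and the NPA levels, and it does not alter the probability assigned to the (merely relabelled, hence identical) winning event. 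Setting $f = h \circ T$ then yields a bijection between strategies for $G$ and strategies for $G^{\prime}$ with $S_{\Sigma}^{\prime} = f(S_{\Sigma})$ winning with probability $p$.

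The only point requiring genuine care — and the step I would write out most explicitly — is the claim that the game-equivalence bijections of \cref{def:game_equiv} act on strategies compatibly with the chosen correlation level $\Sigma$ while preserving win probability. This is asserted by the remark after \cref{def:game_equiv}, and it is what makes the chaining legitimate; beyond this, the lemma is essentially a bookkeeping consequence of \cref{lem:transpose_strat} and \cref{lem:value_unique}, with no substantive obstacle. (One could alternatively state the conclusion directly at the level of win probabilities as $\omega_{\Sigma}(m,n) = \omega_{\Sigma}(n,m)$, which is the form in which \cref{cor:value_symmetric} uses it, but the strategy-level bijection is strictly stronger and is what the statement requests.)
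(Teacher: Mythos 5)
Your proposal is correct and follows exactly the paper's own argument: transpose the strategy via \cref{lem:transpose_strat} to obtain a strategy for $G^{\transpose}$ winning with probability $p$, then use the equivalence $G^{\prime} \sim G^{\transpose}$ from \cref{lem:value_unique} to carry it over to $G^{\prime}$, defining $f$ as the composition. Your added remarks on why relabelling preserves membership in each correlation class are a reasonable elaboration of what the paper leaves implicit.
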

\begin{proof}
    Let $S_{\Sigma}^{\transpose}$ be the \emph{transpose} strategy of $S_{\Sigma}$, obtained from \cref{lem:transpose_strat}.
    Then, $S_{\Sigma}^{\transpose}$ is a valid strategy for $G^{\transpose}$, which wins with probability $p$.
    By \cref{lem:value_unique}, $G^{\prime} \sim G^{\transpose}$, and so there exists a bijection $F$ such that the strategy $S_{\Sigma}^{\prime} = F(S_{\Sigma}^{\transpose})$ for $G^{\prime}$ also wins with probability $p$.
    The required function $f$ is defined by $f(S) = F(S^{\transpose})$.
\end{proof}

\begin{corollary}
\label{cor:value_symmetric}
    Optimal win probability is symmetric in the sense that
    \begin{equation}
        \omega_{\Sigma}(m,n) = \omega_{\Sigma}(n,m) .
    \end{equation}
\end{corollary}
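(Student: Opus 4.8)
The plan is to deduce this directly from \cref{lem:value_symmetric} together with \cref{cor:value_unique}. First I would fix an allowed correlation level $\Sigma$ and choose any \rectdim{m}{n} game $G$ and any \rectdim{n}{m} game $G^{\prime}$; by \cref{cor:value_unique} the optimal win probabilities of these two particular games are precisely $\omega_{\Sigma}(m,n)$ and $\omega_{\Sigma}(n,m)$, so it suffices to show that these two numbers agree.

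Next I would invoke \cref{lem:value_symmetric}, which for a given $\Sigma$-strategy of $G$ supplies a bijection $f$ from the set of $\Sigma$-strategies for $G$ to the set of $\Sigma$-strategies for $G^{\prime}$ that preserves the win probability. Since $f$ is a bijection, $f^{-1}$ exists and likewise preserves win probabilities, so the set of win probabilities realizable by $\Sigma$-strategies for $G$ coincides with the set realizable for $G^{\prime}$. Taking the supremum over each of these (identical) sets, the optimal win probabilities of $G$ and $G^{\prime}$ are equal, which is exactly the claim $\omega_{\Sigma}(m,n) = \omega_{\Sigma}(n,m)$.

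There is essentially no obstacle: the substantive work has already been carried out in \cref{lem:transpose_strat} (exchanging the roles of the two players) and \cref{lem:value_unique} (identifying games of a common dimension up to equivalence), from which \cref{lem:value_symmetric} was assembled. The only point deserving a moment's care is that ``optimal win probability'' is a supremum over a set of strategies, so that a win-probability-preserving bijection between two such sets automatically matches their suprema irrespective of whether either supremum is attained; once this is noted, the corollary follows immediately.
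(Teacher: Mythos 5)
Your proposal is correct and follows essentially the same route as the paper: both deduce the corollary from the win-probability-preserving bijection of \cref{lem:value_symmetric} applied in both directions (the paper phrases this as a short contradiction argument with an optimal strategy, whereas you phrase it as equality of the achievable-value sets and hence of their suprema). Your remark that matching suprema does not require the optimum to be attained is a minor but harmless refinement of the paper's wording.
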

\begin{proof}
    Let $S_{\Sigma}$ be an optimal strategy for an \rectdim{m}{n} game $G$, winning with probability $p$.
    Suppose that $S_{\Sigma}^{\prime}$ found from \cref{lem:value_symmetric} (also winning with probability $p$) is not optimal for an \rectdim{n}{m} game $G^{\prime}$.
    Then, there exists a strategy for $G^{\prime}$ which wins with probability $q > p$.
    Again by \cref{lem:value_symmetric}, this implies the existence of a strategy for $G$ which also wins with probability $q > p$, contradicting the optimality of $S_{\Sigma}$.
    Hence, $S_{\Sigma}^{\prime}$ is an optimal strategy for $G^{\prime}$.
    Since $G$ and $G^{\prime}$ were arbitrary, optimal strategies for all \rectdim{m}{n} and \rectdim{n}{m} games win with equal probability $p = \omega_{\Sigma}(m,n) = \omega_{\Sigma}(n,m)$.
\end{proof}

\begin{lemma}
\label{lem:value_lower_bound}
    Fix a level of allowed correlation $\Sigma$.
    Let the optimal win probability of \rectdim{m}{n} magic rectangle games be given by $\omega_{\Sigma}(m,n)$.
    If $m^{\prime} \geq m$ and $n^{\prime} \geq n$, then the optimal win probability of \rectdim{m^{\prime}}{n^{\prime}} games satisfies
    \begin{equation}
    \label{eq:value_lower_bound}
        \omega_{\Sigma}(m^{\prime},n^{\prime})
        \geq 1 - \frac{mn}{m^{\prime}n^{\prime}} [1 - \omega_{\Sigma}(m,n)] .
    \end{equation}
\end{lemma}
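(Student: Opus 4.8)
The plan is to embed an optimal $\Sigma$-strategy for an \rectdim{m}{n} game into the top-left block of an \rectdim{m'}{n'} game, pad the remaining cells with fixed $+1$ entries, and observe that this padded strategy wins outright whenever the assigned cell falls outside the block.

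Concretely, I would first use \cref{cor:value_unique} to pass to a conveniently chosen \rectdim{m'}{n'} game $G'$: set $\alpha'_{m+1} = \dots = \alpha'_{m'} = +1$ and $\beta'_{n+1} = \dots = \beta'_{n'} = +1$, and take the remaining parameters $\alpha'_1,\dots,\alpha'_m,\beta'_1,\dots,\beta'_n$ to be those of some legitimate \rectdim{m}{n} game $G$. This is consistent with \cref{eq:prod_rule} for $G'$ precisely because $\alpha'_1\cdots\alpha'_m\cdot\beta'_1\cdots\beta'_n=-1$ is exactly \cref{eq:prod_rule} for $G$. Now fix a $\Sigma$-strategy for $G$ achieving $\omega_{\Sigma}(m,n)$, and define a strategy for $G'$: if Alice receives $x\le m$ she runs her $G$-strategy to obtain $(A_1,\dots,A_n)$ and appends $A_{n+1}=\dots=A_{n'}=+1$; if she receives $x>m$ she outputs the all-$+1$ string, whose product is $\alpha'_x=+1$. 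Bob behaves symmetrically. \Cref{rule:plus_minus,rule:alice_prod,rule:bob_prod} then hold by construction.

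Next I would compute the win probability against the uniform input $(X,Y)$ on $\{1,\dots,m'\}\times\{1,\dots,n'\}$. With probability $mn/(m'n')$ we have $x\le m$ and $y\le n$; then the shared cell $(x,y)$ lies in the block, both players are running their $G$-strategies there, $A_y$ and $B_x$ are genuine $G$-outputs, and the winning condition $A_y=B_x$ is exactly that of $G$ on input $(x,y)$, so averaging over the uniform block inputs the conditional win probability is $\omega_{\Sigma}(m,n)$. Otherwise $x>m$ or $y>n$, and a short check over the three subcases shows $A_y=B_x=+1$, since at least one player answers with the all-$+1$ string (or with a padded coordinate), so the game is won with certainty. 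Therefore the strategy wins with probability $\frac{mn}{m'n'}\,\omega_{\Sigma}(m,n)+\bigl(1-\frac{mn}{m'n'}\bigr)$, which is the right-hand side of \cref{eq:value_lower_bound}; since $\omega_{\Sigma}(m',n')$ is the optimum over $\Sigma$-strategies for $G'$, the inequality follows.

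The one point needing care is that the padded strategy still lies in the correlation level $\Sigma$. For $\Sigma\in\{L,N\}$ this is immediate, as the construction is a deterministic (hence nonsignaling) relabelling of the $G$-behavior on the old inputs together with deterministic responses on the new inputs. For $\Sigma=Q$ one adjoins to Alice's and Bob's measurement families the single-outcome (identity) measurements producing the all-$+1$ string for each new input; these commute with everything, so the extended realization is quantum, and the same adjunction preserves membership in $\tilde{Q}$ and in each NPA level $Q_i$ by their operational characterizations. I would include this verification at whatever level of detail the paper's conventions demand; everything else is bookkeeping, and I do not foresee a genuine obstacle.
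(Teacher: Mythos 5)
Your proposal is correct and follows essentially the same route as the paper: extend the game parameters by $+1$'s, embed the optimal \rectdim{m}{n} strategy in the top-left block padded with deterministic $+1$ outputs, and average the win probability over the uniform inputs. Your extra paragraph verifying that the padded strategy remains within the correlation level $\Sigma$ is a reasonable point of care that the paper leaves implicit, but it does not change the argument.
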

\begin{proof}
    Let $G$ be an \rectdim{m}{n} magic rectangle game specified by the parameters $\alpha_{1},\dots,\alpha_{m}$ and $\beta_{1},\dots,\beta_{n}$.
    From this, define an \rectdim{m^{\prime}}{n^{\prime}} game $G^{\prime}$ such that its parameters are
    \begin{subequations}
    \label{eq:value_lower_bound_params}
    \begin{align}
        \alpha_{i}^{\prime} & =
        \begin{cases}
            \alpha_{i} & \text{if $1 \leq i \leq m$,} \\
            1 & \text{if $m < i \leq m^{\prime}$,}
        \end{cases} \\
        \beta_{j}^{\prime} & =
        \begin{cases}
            \beta_{j} & \text{if $1 \leq j \leq n$,} \\
            1 & \text{if $n < j \leq n^{\prime}$.}
        \end{cases}
    \end{align}
    \end{subequations}
    Note that $G^{\prime}$ is indeed a valid game, as its parameters automatically satisfy \cref{eq:prod_rule}.
    Let $S_{\Sigma}$ be an optimal strategy for $G$, winning with probability $\omega_{\Sigma}(m,n)$, in which Alice outputs according to the random row vector $\bm{A} = (A_{1},\dots,A_{n})$ and Bob according to the random column vector $\bm{B} = (B_{1},\dots,B_{m})^{\transpose}$.
    Construct a strategy $S_{\Sigma}^{\prime}$ for $G^{\prime}$ in which Alice and Bob play their part of the strategy $S_{\Sigma}$ upon inputs $1 \leq X^{\prime} \leq m$ and $1 \leq Y^{\prime} \leq n$ respectively, but deterministically append $1$'s to their outputs to make up the required output length; upon other inputs, the players output only $1$'s.
    That is,
    \begin{subequations}
    \begin{align}
        \bm{A}^{\prime} & =
        \begin{cases}
            (A_{1},\dots,A_{n},1,\dots,1) & \text{if $1 \leq X^{\prime} \leq m$,} \\
            (1,\dots,1) & \text{if $m < X^{\prime} \leq m^{\prime}$,}
        \end{cases} \\
        \bm{B}^{\prime} & =
        \begin{cases}
            (B_{1},\dots,B_{m},1,\dots,1)^{\transpose} & \text{if $1 \leq Y^{\prime} \leq n$,} \\
            (1,\dots,1)^{\transpose} & \text{if $n < Y^{\prime} \leq n^{\prime}$.}
        \end{cases}
    \end{align}
    \end{subequations}
    It is clear that these outputs always satisfy the rules given in \cref{def:magic_rectangle} for the parameters of $G^{\prime}$ defined in \cref{eq:value_lower_bound_params}.
    Moreover, by using strategy $S_{\Sigma}^{\prime}$, the players succeed at $G^{\prime}$ with probability $\omega_{\Sigma}(m,n)$ upon $mn$ of the $m^{\prime}n^{\prime}$ possible inputs, and with certainty upon the remaining inputs.
    By \cref{cor:value_unique}, the win probability of $S_{\Sigma}^{\prime}$ at the \rectdim{m^{\prime}}{n^{\prime}} game $G^{\prime}$ is at most the optimal win probability for \rectdim{m^{\prime}}{n^{\prime}} games $\omega_{\Sigma}(m^{\prime},n^{\prime})$.
    Hence, since the inputs are chosen uniformly at random,
    \begin{equation}
        \omega_{\Sigma}(m^{\prime},n^{\prime})
        \geq \frac{mn}{m^{\prime}n^{\prime}} \omega_{\Sigma}(m,n) + \frac{m^{\prime}n^{\prime} - mn}{m^{\prime}n^{\prime}} ,
    \end{equation}
    which is exactly \cref{eq:value_lower_bound}.
\end{proof}

\begin{corollary}
\label{cor:value_increasing}
    Fix a correlation level $\Sigma$, and let $m^{\prime} \geq m$ and $n^{\prime} \geq n$. Then
    \begin{equation}
        \omega_{\Sigma}(m^{\prime},n^{\prime}) \geq \omega_{\Sigma}(m,n) .
    \end{equation}
\end{corollary}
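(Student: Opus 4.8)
The plan is to derive this immediately from \cref{lem:value_lower_bound}, which already does all the real work by constructing an explicit embedding of an \rectdim{m}{n} strategy into an \rectdim{m^{\prime}}{n^{\prime}} game. Starting from the bound
\begin{equation*}
    \omega_{\Sigma}(m^{\prime},n^{\prime}) \geq 1 - \frac{mn}{m^{\prime}n^{\prime}} \bigl[ 1 - \omega_{\Sigma}(m,n) \bigr]
\end{equation*}
supplied by that lemma, I would observe two elementary facts. First, since $m^{\prime} \geq m$ and $n^{\prime} \geq n$ we have $m^{\prime}n^{\prime} \geq mn > 0$, so the coefficient $mn/(m^{\prime}n^{\prime})$ lies in $(0,1]$. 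Second, $\omega_{\Sigma}(m,n)$ is a win probability, hence at most $1$, so the bracketed quantity $1 - \omega_{\Sigma}(m,n)$ is nonnegative. Multiplying a nonnegative number by something no larger than $1$ can only decrease it, so $\frac{mn}{m^{\prime}n^{\prime}}[1 - \omega_{\Sigma}(m,n)] \leq 1 - \omega_{\Sigma}(m,n)$, and substituting this back yields $\omega_{\Sigma}(m^{\prime},n^{\prime}) \geq 1 - [1 - \omega_{\Sigma}(m,n)] = \omega_{\Sigma}(m,n)$.

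There is essentially no obstacle here; the statement is a direct sign-chasing consequence of the quantitative bound already established. One could alternatively give a self-contained argument by repeating the padding construction from the proof of \cref{lem:value_lower_bound} (appending deterministic $+1$ outputs on the new rows and columns and outputting only $+1$'s on the new inputs, then invoking \cref{cor:value_unique} to compare with the optimal \rectdim{m^{\prime}}{n^{\prime}} value), but invoking \cref{lem:value_lower_bound} as a black box is cleaner and avoids duplicating that argument. The only point worth stating explicitly is why the monotonicity holds separately in each coordinate, which is covered since the lemma allows $m = m^{\prime}$ or $n = n^{\prime}$.
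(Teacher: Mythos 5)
Your proposal is correct and matches the paper's own proof, which likewise derives the corollary immediately from \cref{eq:value_lower_bound} by noting that $\frac{mn}{m^{\prime}n^{\prime}} \leq 1$ and $\omega_{\Sigma}(m,n) \leq 1$. The sign-chasing details you spell out are exactly what the paper leaves implicit.
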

\begin{proof}
    Immediate from \cref{eq:value_lower_bound} upon noting $\frac{mn}{m^{\prime} n^{\prime}} \leq 1$ and $\omega_{\Sigma}(m,n) \leq 1$.
\end{proof}

Having stated and proven the preceding properties of magic rectangle games, it is now easy to see that \cref{thm:chara_dims} holds as follows.

\begin{proof}[Proof of \cref{thm:chara_dims}]
    The second part of the claim (that only a single example game for each different dimension need be considered) is shown by \cref{lem:value_unique} and \cref{cor:value_unique}, which state that all games of the same dimension are equivalent.

    For the first part of the claim, we may first choose to consider optimal strategies for \rectdim{1}{n} games.
    Then, by \cref{lem:value_symmetric} and \cref{cor:value_symmetric}, there are invertible maps between optimal strategies for \rectdim{n}{1} games and \rectdim{1}{n} games.
    Thus we next study \rectdim{2}{n} games without the need to consider the \rectdim{2}{1} case.
    Similarly, we then need not consider \rectdim{n}{2} cases.
    Finally, considering the following observations, we can see that all \rectdim{m}{n} games where both $m \geq 3$ and $n \geq 3$ can be won with certainty for quantum (or stronger) behaviors.
    It was pointed out in \cref{sec:magic_square_game} that quantum strategies for the standard \rectdim{3}{3} magic square game which win with certainty are already known.
    As the rules \cref{rule:stand_plus_minus,rule:stand_alice_prod,rule:stand_bob_prod} for the standard \rectdim{3}{3} magic square game are a special case of our magic rectangle games given in \cref{def:magic_rectangle}, the existence of quantum winning strategies for all general \rectdim{3}{3} games is guaranteed by \cref{lem:value_unique}.
    Therefore, since by \cref{cor:value_increasing} the quantum value $\omega_{Q}(m,n)$ is increasing in $m$ and $n$, and noting the inequalities of \cref{eq:npa_winprob_ineq}, all magic rectangle games with $m \geq 3$ and $n \geq 3$ satisfy $\omega_{\Sigma}(m,n) = 1$, where $\Sigma$ is any nonsignaling correlation level at least as strong as the quantum set.
    Furthermore, the proof of \cref{lem:value_lower_bound} combined with \cref{lem:value_unique} shows how to construct winning strategies for all such games from a winning \rectdim{3}{3} strategy.
    Hence, the \rectdim{3}{3} games already studied are the final case required to complete the characterization of magic rectangle games.
\end{proof}

\section{Characterization of magic rectangles}
\label{sec:characterization}

Following \cref{thm:chara_dims}, we characterize magic rectangle games of all sizes by considering those of dimension \rectdim{1}{n} for $n \geq 1$ and \rectdim{2}{n} for $n \geq 2$.
The final \rectdim{3}{3} case was already discussed in \cref{sec:magic_square_game}.

\begin{theorem}
\label{thm:characterization}
    The optimal success probabilities of all magic rectangle games can be characterized as follows:
    \begin{enumerate}
        \item
        \label{item:1xn}
            Games of dimension \rectdim{1}{n} cannot exhibit superclassical behavior;
            \begin{equation}
                \omega_{N}(1,n) = \omega_{L}(1,n) = 1 - \frac{1}{n} .
            \end{equation}
        \item
        \label{item:2xn}
            Games of dimension \rectdim{2}{n} for $n \geq 2$ satisfy
            \begin{equation}
                1 - \frac{2 - \sqrt{2}}{2n} \leq \omega_{Q}(2,n) \leq \omega_{1+AB}(2,n) = \frac{1}{2} {\left(1 + \sqrt{1 - \frac{1}{n}} \right)} ,
            \end{equation}
            where the final equality is conjectured, with strong numerical evidence for $n \leq 6$.
            Such games can be won with certainty in the general nonsignaling regime;
            \begin{equation}
                \omega_{N}(2,n) = 1 .
            \end{equation}
            Moreover, for NPA hierarchy level 1 (or stronger) correlations and $n \geq 3$,
            \begin{equation}
                \omega_{1}(2,n) = 1 .
            \end{equation}
        \item
        \label{item:3x3_and_larger}
            For all quantum or stronger correlations, games of dimension \rectdim{m}{n} where both $m \geq 3$ and $n \geq 3$ can be won with certainty;
            \begin{equation}
                \omega_{Q}(m,n) = 1 .
            \end{equation}
    \end{enumerate}
\end{theorem}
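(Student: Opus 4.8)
The plan is to prove the three parts by reducing each to a small base case, leaning on the structural results of \cref{sec:prelim_results}; recall also that by \cref{cor:value_symmetric} it suffices to treat dimensions with $m \leq n$. \Cref{item:3x3_and_larger} is essentially already contained in the proof of \cref{thm:chara_dims}: by \cref{cor:value_increasing} the quantum value $\omega_{Q}(m,n)$ is nondecreasing in each argument, \cref{lem:value_unique} together with the known perfect quantum strategy for the standard magic square gives $\omega_{Q}(3,3) = 1$, and the inequalities \cref{eq:npa_winprob_ineq} then force $\omega_{\Sigma}(m,n) = 1$ for all $m,n \geq 3$ and every $\Sigma$ at least as strong as $Q$. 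So the remaining work is \cref{item:1xn,item:2xn}.

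For \cref{item:1xn} I would observe that a \rectdim{1}{n} game is effectively a one-player game. A column of a \rectdim{1}{n} table has a single cell, so in the natural alphabets \cref{rule:bob_prod} forces Bob's reply on column $y$ to be exactly $\beta_{y}$, and the winning event \cref{eq:wincond_simple} reduces to $A_{y} = \beta_{y}$, a function of Alice's output and the column index alone. In any nonsignaling behavior Alice's distribution over $\bm{A}$ does not depend on Bob's input, while \cref{rule:alice_prod} together with \cref{eq:prod_rule} rules out the string $\bm{A} = (\beta_{1},\dots,\beta_{n})$, so every admissible output agrees with $(\beta_{1},\dots,\beta_{n})$ on at most $n-1$ coordinates; averaging over the uniformly chosen column then gives $\omega_{N}(1,n) \leq 1 - 1/n$. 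The matching lower bound is the classical value $\omega_{L}(1,n) = 1 - 1/n$ from \cref{eq:mxn_classical_value}, so by \cref{eq:npa_winprob_ineq} all intermediate levels coincide.

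\Cref{item:2xn} carries the real content. For $\omega_{N}(2,n) = 1$ I would use \cref{thm:2x2_chsh_equiv} to identify a \rectdim{2}{2} game with the CHSH game, recall that a Popescu--Rohrlich box wins CHSH with certainty so $\omega_{N}(2,2) = 1$, and conclude $\omega_{N}(2,n) = 1$ for all $n \geq 2$ from \cref{cor:value_increasing}. For the quantum lower bound, the same identification together with the optimal quantum CHSH strategy (Tsirelson's bound) gives a strategy for a \rectdim{2}{2} game with $\omega_{Q}(2,2) = \tfrac{1}{2}\bigl(1 + 1/\sqrt{2}\bigr)$, hence $1 - \omega_{Q}(2,2) = (2 - \sqrt{2})/4$; substituting this into \cref{lem:value_lower_bound} with $(m,n) = (2,2)$ and $(m^{\prime},n^{\prime}) = (2,n)$ gives exactly $\omega_{Q}(2,n) \geq 1 - (2 - \sqrt{2})/(2n)$. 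The inequality $\omega_{Q}(2,n) \leq \omega_{1+AB}(2,n)$ is the inclusion $Q \subseteq Q_{1+AB} = \tilde{Q}$ of \cref{eq:npa_winprob_ineq}, while the closed form $\omega_{1+AB}(2,n) = \tfrac{1}{2}\bigl(1 + \sqrt{1 - 1/n}\bigr)$ I would obtain by solving the semidefinite program certifying $Q_{1+AB}$ — exactly for $n \leq 6$, matching the claimed expression, and conjecturing it for all $n$. Finally, for $\omega_{1}(2,n) = 1$ with $n \geq 3$, by \cref{cor:value_increasing} it suffices to exhibit one level-1 NPA behavior winning some \rectdim{2}{3} game with certainty; I would write down an explicit positive semidefinite moment matrix on the level-1 monomials whose entries realize the linear relations encoding \cref{rule:plus_minus,rule:alice_prod,rule:bob_prod} and perfect agreement on every input, a finite feasibility check.

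The principal obstacle is the \rectdim{2}{n} upper bound: promoting the numerically observed equality $\omega_{1+AB}(2,n) = \tfrac{1}{2}(1 + \sqrt{1 - 1/n})$ to a theorem for all $n$ would require an analytic dual certificate for the $Q_{1+AB}$ semidefinite program, uniform in $n$, which we do not have --- hence the conjecture. A lesser but genuine subtlety is the level-1 NPA construction for \rectdim{2}{3}: since the conjectured formula puts both $\omega_{Q}(2,3)$ and $\omega_{1+AB}(2,3)$ strictly below $1$, the certificate must simultaneously attain value $1$ and violate the almost quantum constraints, so this ``pseudotelepathy at level 1'' must be produced by hand rather than inherited from a stronger set.
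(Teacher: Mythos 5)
Your proposal is correct and follows essentially the same route as the paper: \cref{item:3x3_and_larger} via the perfect \rectdim{3}{3} strategy together with \cref{lem:value_unique,cor:value_increasing}, \cref{item:1xn} via Bob's forced deterministic column values plus no-signaling and the product constraint, and \cref{item:2xn} via the CHSH equivalence of \cref{thm:2x2_chsh_equiv}, padding through \cref{lem:value_lower_bound}, the conjectured $1+AB$ value from the NPA semidefinite programs, and an explicit level-1 certificate for a \rectdim{2}{3} game (which the paper supplies in \cref{sec:2x3_at_npa1}). No gaps beyond the conjecture the paper itself leaves open.
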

\begin{proof}
    The content of \cref{item:1xn} is \cref{thm:1xn_is_classical}.
    The discussion in \cref{sec:2xn_games} covers \cref{item:2xn}.
    \Cref{item:3x3_and_larger} was discussed as part of the proof of \cref{thm:chara_dims}, and can be seen by combining \cref{cor:value_increasing} with the fact that $\omega_{Q}(3,3) = 1$ by \cref{cor:value_unique}.
\end{proof}

\subsection{1-by-n magic rectangles}
\label{sec:1xn_games}

\begin{theorem}
\label{thm:1xn_is_classical}
    Under any set of nonsignaling correlations, the optimal win probability of \rectdim{1}{n} games coincides with the classical value,
    \begin{equation}
        \omega_{N}(1,n) = \omega_{L}(1,n) = 1 - \frac{1}{n} .
    \end{equation}
\end{theorem}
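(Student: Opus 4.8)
The plan is to prove the two matching inequalities $\omega_{L}(1,n) \le \omega_{N}(1,n)$ and $\omega_{N}(1,n) \le 1 - 1/n$. The first is immediate: the classical value is $\omega_{L}(1,n) = 1 - 1/n$ by \cref{eq:mxn_classical_value}, and since every local behavior is nonsignaling the optimal win probability can only increase, as recorded in \cref{eq:npa_winprob_ineq}. Hence all the work is in the upper bound $\omega_{N}(1,n) \le 1 - 1/n$; once it is in hand, the chain $\omega_{L}(1,n) \le \omega_{Q}(1,n) \le \dots \le \omega_{N}(1,n)$ from \cref{eq:npa_winprob_ineq} pins every intermediate level to the common value $1 - 1/n$.

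For the upper bound I would first record the (mild) structure of a \rectdim{1}{n} game. Alice has a single input, so her output is a length-$n$ row $\bm{A} = (A_{1},\dots,A_{n})$ constrained only by \cref{rule:alice_prod} to satisfy $\prod_{j} A_{j} = \alpha_{1}$, while Bob is assigned $Y = y$ and his ``column'' has length one, so \cref{rule:bob_prod} forces $B_{1} = \beta_{y}$. Reading off \cref{eq:wincond}, the winning event on input $y$ is therefore contained in $\{A_{y} = \beta_{y}\} \cap \{\prod_{j} A_{j} = \alpha_{1}\}$. Now fix an arbitrary nonsignaling behavior and let $\omega$ be its win probability. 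The only nonsignaling constraint that bites is that Alice's marginal output distribution $P(\bm{a} \mid y)$ is independent of $y$ --- precisely the constraint ruling out the signaling strategy in which Alice simply copies $\beta_{Y}$ --- so, writing that marginal as $P(\bm{a})$,
\begin{equation}
    \omega \;\le\; \frac{1}{n}\sum_{y=1}^{n} P\!\left(A_{y} = \beta_{y},\ \textstyle\prod_{j} A_{j} = \alpha_{1}\right) \;=\; \frac{1}{n}\,\mathbb{E}\!\left[\#\{y : A_{y} = \beta_{y}\}\cdot \mathbf{1}\!\left[\textstyle\prod_{j} A_{j} = \alpha_{1}\right]\right].
\end{equation}

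The combinatorial core is then a one-line parity argument: for any fixed $\bm{a} \in \{+1,-1\}^{n}$ with $\prod_{j} a_{j} = \alpha_{1}$, it is impossible that $a_{j} = \beta_{j}$ for every $j$, since that would give $\prod_{j} \beta_{j} = \alpha_{1}$, i.e.\ $\alpha_{1}\beta_{1}\cdots\beta_{n} = 1$, contradicting \cref{eq:prod_rule}. Thus $\#\{y : a_{y} = \beta_{y}\} \le n - 1$ on the event $\prod_{j} A_{j} = \alpha_{1}$, so the expectation above is at most $n - 1$ and $\omega \le 1 - 1/n$. The main ``obstacle'' is really only bookkeeping: invoking nonsignaling in the correct direction (Alice's marginal, not Bob's), and checking that degenerate cases are consistent --- e.g.\ for $n = 1$ Alice's forced output $A_{1} = \alpha_{1}$ already differs from $\beta_{1} = -\alpha_{1}$, matching $\omega = 0 = 1 - 1/1$. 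No semidefinite or convex-optimization machinery is needed; the bound holds pointwise on the nonsignaling polytope.
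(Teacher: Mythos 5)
Your proposal is correct and follows essentially the same route as the paper's proof: Bob's single-cell outputs are forced to be $\beta_{y}$, the parity constraint of \cref{eq:prod_rule} forces Alice's row to disagree with $(\beta_{1},\dots,\beta_{n})$ in at least one position, and nonsignaling plus the uniform choice of $Y$ makes the losing probability at least $1/n$. Your averaged counting formulation is just a slightly more explicit bookkeeping of the same argument, so there is nothing substantive to add.
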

\begin{proof}
    For all possible inputs $Y=j$ for Bob, his single output value is deterministically equal to $\beta_{j}$ according to \cref{rule:bob_prod} of \cref{def:magic_rectangle}.
    However, recalling \cref{eq:prod_rule} and denoting the product of Alice's single output row by $\alpha$, we require any valid \rectdim{1}{n} game to satisfy $\alpha \neq \beta_{1} \dots \beta_{n}$.
    That is, Alice's output row must contain at least one element, in position $k$ say, which differs from the output value $\beta_{k}$ Bob would give if his input was $Y=k$.
    By the assumption of no-signaling, Alice cannot have any knowledge about which of $n$ possible uniform inputs was provided to Bob.
    Thus the probability of the losing event that $A_{k} \neq \beta_{k}$ (the element of Alice's output corresponding to Bob's input differs from Bob's output) is at least $n^{-1}$.
    Therefore $\omega_{N}(1,n) \leq 1 - n^{-1} = \omega_{L}(1,n)$.
    Since trivially also $\omega_{N}(1,n) \geq \omega_{L}(1,n)$ by \cref{eq:npa_winprob_ineq}, we have the result.
\end{proof}

\subsection{2-by-n magic rectangles}
\label{sec:2xn_games}

Before discussing the general case of \rectdim{2}{n} magic rectangle games, let us first examine the special case of \rectdim{2}{2} magic square games.

\subsubsection{2-by-2 magic squares}
\label{sec:2x2_magic_squares}

In this case, \cref{eq:prod_rule} states that either exactly one of the possible rows or columns is required to have a negative product, or exactly one is required to have a positive product.
In fact, any such \rectdim{2}{2} magic square game can be identified with the well-known CHSH game, in which Alice and Bob are provided binary inputs $X_{\text{CHSH}} \in \{0,1\}$ and $Y_{\text{CHSH}} \in \{0,1\}$ uniformly at random, and win by returning binary outputs $A_{\text{CHSH}} \in \{0,1\}$ and $B_{\text{CHSH}} \in \{0,1\}$ which satisfy \cite{clauser1969proposed}
\begin{equation}
\label{eq:wincond_chsh}
    A_{\text{CHSH}} \oplus B_{\text{CHSH}} = X_{\text{CHSH}} \land Y_{\text{CHSH}} .
\end{equation}

We will now explicitly construct this equivalence, whereupon we note the statement $\omega_{L}(2,2) = \frac{3}{4}$ defines the unique nontrivial facet of the local polytope in the $(2,2,2)$ Bell scenario (which corresponds also to the CHSH inequality) \cite{fine1982hidden,pironio2004aspects}.

\begin{theorem}
\label{thm:2x2_chsh_equiv}
    Any \rectdim{2}{2} magic square game is equivalent (in the sense of \cref{def:game_equiv}) to the CHSH game.
\end{theorem}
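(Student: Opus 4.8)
The plan is to exhibit explicit bijections between the natural alphabets of an arbitrary $2 \times 2$ magic square game and the input/output alphabets of the CHSH game, and then to check that these bijections carry the winning event \cref{eq:wincond_simple} of the magic square game exactly onto the CHSH winning condition \cref{eq:wincond_chsh}. By \cref{cor:value_unique} (or \cref{lem:value_unique}) it suffices to treat a single representative $2 \times 2$ game, so I would fix the convenient choice $\alpha_{1} = \alpha_{2} = \beta_{1} = +1$, $\beta_{2} = -1$, which satisfies \cref{eq:prod_rule}; any other $2 \times 2$ game is equivalent to this one by \cref{lem:value_unique} and equivalence is transitive, so establishing the equivalence for this representative yields it for all $2 \times 2$ magic square games.

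First I would spell out the natural alphabets. For this representative, Alice on input $X = i$ outputs a pair $(A_{1}, A_{2}) \in \{+1,-1\}^{2}$ with $A_{1} A_{2} = \alpha_{i} = +1$, so her admissible rows are exactly $(+,+)$ and $(-,-)$; the pair $(i, \bm{A})$ is thus determined by $i \in \{1,2\}$ together with one bit, say $A_{1}$. Likewise Bob on input $Y = 1$ outputs a column with $B_{1} B_{2} = \beta_{1} = +1$ (admissible columns $(+,+)^{\transpose}$, $(-,-)^{\transpose}$) and on input $Y = 2$ outputs a column with $B_{1} B_{2} = \beta_{2} = -1$ (admissible columns $(+,-)^{\transpose}$, $(-,+)^{\transpose}$), so $(j, \bm{B})$ is likewise determined by $j$ and one bit, say $B_{1}$. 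Each natural alphabet therefore has exactly four elements, matching the four CHSH question/answer pairs. I would then define the bijection $f$ on Alice's side by sending the magic-square input $i$ to the CHSH input $X_{\text{CHSH}} = i - 1$ and encoding Alice's output bit via $A_{\text{CHSH}} = (1 - A_{1})/2$ (i.e. $+1 \mapsto 0$, $-1 \mapsto 1$); on Bob's side I would set $Y_{\text{CHSH}} = j - 1$ and, to absorb the asymmetry introduced by $\beta_{2} = -1$, encode Bob's output bit by something like $B_{\text{CHSH}} = (1 - B_{1})/2$ when $j = 1$ and a suitably flipped version when $j = 2$. The precise sign conventions are the routine bookkeeping.

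The substantive check is that, under $f$ and $g$, the event $A_{Y} = B_{X}$ (the shared-cell entry: Alice's entry in column $Y$ equals Bob's entry in row $X$) translates into $A_{\text{CHSH}} \oplus B_{\text{CHSH}} = X_{\text{CHSH}} \wedge Y_{\text{CHSH}}$. Writing $A_{2} = \alpha_{X}/A_{1} = A_{1}$ (since $\alpha_{X} = +1$) and $B_{2} = \beta_{Y}/B_{1}$, the shared cell condition becomes a small case analysis over the four $(X,Y)$ pairs: for $(X,Y) \in \{(1,1),(1,2),(2,1)\}$ it reduces to the agreement $A_{1} = B_{1}$ (up to the encoding sign flips), matching $A_{\text{CHSH}} \oplus B_{\text{CHSH}} = 0$, while for $(X,Y) = (2,2)$ the factor $\beta_{2} = -1$ forces the opposite relation, matching $A_{\text{CHSH}} \oplus B_{\text{CHSH}} = 1$; and $X_{\text{CHSH}} \wedge Y_{\text{CHSH}}$ is $1$ precisely on $(X,Y) = (2,2)$. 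Choosing the Bob-side encoding so that this single sign discrepancy lands exactly on the AND is what the $j$-dependent flip above is for. Once the four cases are verified, $f$ and $g$ are bijections between natural alphabets under which the winning events coincide, which is precisely \cref{def:game_equiv}, so the game is equivalent to CHSH; combining with \cref{lem:value_unique} and transitivity of $\sim$ finishes the proof.

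I expect the only real obstacle to be keeping the sign/bit encodings consistent so that the lone asymmetric column parameter $\beta_{2} = -1$ maps onto the lone asymmetric CHSH output $X_{\text{CHSH}} \wedge Y_{\text{CHSH}} = 1$; there is a genuine choice to be made in how the $\pm 1 \leftrightarrow \{0,1\}$ dictionary is applied on Bob's two inputs, and the wrong choice produces a parity game whose ``$1$'' sits on the wrong input pair (this would still be equivalent to CHSH, but only after relabelling, so it is cleanest to pick the encoding that gives \cref{eq:wincond_chsh} on the nose). Everything else — counting the alphabets, checking bijectivity, the four-case verification — is mechanical.
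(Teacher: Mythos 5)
Your proposal is correct and follows essentially the same route as the paper: fix the representative game with $(\alpha_{1},\alpha_{2})=(+,+)$, $(\beta_{1},\beta_{2})=(+,-)$, invoke \cref{lem:value_unique} plus transitivity, build bijections of the natural alphabets with Bob's identification depending on his input, and check the four $(X,Y)$ cases so that the lone disagreement lands on $(2,2)$, matching \cref{eq:wincond_chsh}. One small remark: your own case analysis ($A_{Y}=B_{X}$ reduces to $A_{1}=B_{1}$ except at $(X,Y)=(2,2)$, where it is $A_{1}=-B_{1}$) shows the uniform encoding $B_{\text{CHSH}}=(1-B_{1})/2$ already works with no extra flip on $j=2$, which is exactly the identification tabulated in the paper.
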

\begin{proof}
    Consider the \rectdim{2}{2} magic square with specified row products $(\alpha_{1}, \alpha_{2}) = (+,+)$ and column products $(\beta_{1}, \beta_{2}) = (+,-)$.
    We first show that this game is equivalent to the CHSH game.
    Then, since all \rectdim{2}{2} games are equivalent (\cref{lem:value_unique}), the desired result follows by transitivity.

    We can identify the input events of the two games as
    \begin{subequations}
    \begin{align}
        X_{\text{CHSH}} = 0 &\longleftrightarrow X = 1 , \\
        X_{\text{CHSH}} = 1 &\longleftrightarrow X = 2
    \end{align}
    \end{subequations}
    for Alice, and for Bob
    \begin{subequations}
    \begin{align}
        Y_{\text{CHSH}} = 0 &\longleftrightarrow Y = 1 , \\
        Y_{\text{CHSH}} = 1 &\longleftrightarrow Y = 2 .
    \end{align}
    \end{subequations}
    Alice identifies her two possible outputs as simply
    \begin{subequations}
    \begin{align}
        A_{\text{CHSH}} = 0 &\longleftrightarrow \bm{A} = (+,+) , \\
        A_{\text{CHSH}} = 1 &\longleftrightarrow \bm{A} = (-,-) .
    \end{align}
    \end{subequations}
    Bob identifies his outputs depending on his assigned input.
    If $Y_{\text{CHSH}} = 0$ (equivalently $Y = 1$), then he makes the identifications
    \begin{subequations}
    \begin{align}
        B_{\text{CHSH}} = 0 &\longleftrightarrow \bm{B} = (+,+)^{\transpose} , \\
        B_{\text{CHSH}} = 1 &\longleftrightarrow \bm{B} = (-,-)^{\transpose} .
    \end{align}
    \end{subequations}
    However, if $Y_{\text{CHSH}} = 1$ (equivalently $Y = 2$), then he makes alternative identifications
    \begin{subequations}
    \begin{align}
        B_{\text{CHSH}} = 0 &\longleftrightarrow \bm{B} = (+,-)^{\transpose} , \\
        B_{\text{CHSH}} = 1 &\longleftrightarrow \bm{B} = (-,+)^{\transpose} .
    \end{align}
    \end{subequations}
    These identifications form bijections $f \colon \mathscr{A}_{\text{CHSH}} \to \mathscr{A}$ and $g \colon \mathscr{B}_{\text{CHSH}} \to \mathscr{B}$ between the natural alphabets of each game, and are explicitly tabulated in \cref{tab:2x2_chsh_biject}.
    \begin{table}[htb]
        \centering
        \small
        \caption{
            The bijections $f \colon \mathscr{A}_{\text{CHSH}} \to \mathscr{A}$ and $g \colon \mathscr{B}_{\text{CHSH}} \to \mathscr{B}$ used to show the equivalence between the CHSH game and the \rectdim{2}{2} magic square game with parameters $(\alpha_{1},\alpha_{2}) = (+,+)$ and $(\beta_{1},\beta_{2}) = (+,-)$.
            Elements of the natural alphabets $\mathscr{A}$, $\mathscr{B}$, $\mathscr{A}_{\text{CHSH}}$, and $\mathscr{B}_{\text{CHSH}}$ have the form of possible input/output pairs for each game and player, with the input written first.
            }
        \label{tab:2x2_chsh_biject}
        \begin{tabular}{cccc}
            \toprule
            \multicolumn{2}{c}{$f$} & \multicolumn{2}{c}{$g$} \\
            \cmidrule(r){1-2}\cmidrule(l){3-4}
            $\mathscr{A}_{\text{CHSH}}$ & $\mathscr{A}$ & $\mathscr{B}_{\text{CHSH}}$ & $\mathscr{B}$ \\
            \midrule
            $(0,0)$ & $(1, (+,+))$ & $(0,0)$ & $(1, (+,+)^{\transpose})$ \\
            $(0,1)$ & $(1, (-,-))$ & $(0,1)$ & $(1, (-,-)^{\transpose})$ \\
            $(1,0)$ & $(2, (+,+))$ & $(1,0)$ & $(2, (+,-)^{\transpose})$ \\
            $(1,1)$ & $(2, (-,-))$ & $(1,1)$ & $(2, (-,+)^{\transpose})$ \\
            \bottomrule
        \end{tabular}
    \end{table}

    It remains to show that the winning event for the CHSH game, \cref{eq:wincond_chsh}, and the winning event for the \rectdim{2}{2} magic rectangle game of \cref{eq:wincond_simple} upon any input,
    \begin{equation}
    \label{eq:wincond_anyinput}
        \bigcup_{\mathclap{x,y \in \{1,2\}}} \; [(A_{y} = B_{x}) \cap (X=x) \cap (Y=y)] ,
    \end{equation}
    are identical under the functions $f$ and $g$.
    We can rewrite these two events to more closely resemble one another as
    \begin{equation}
        \bigcup_{\mathclap{x,y \in \{0,1\}}} \; [(A_{\text{CHSH}} \oplus B_{\text{CHSH}} = x \land y) \cap (X_{\text{CHSH}} = x) \cap (Y_{\text{CHSH}} = y)]
    \end{equation}
    for \cref{eq:wincond_chsh}, and for \cref{eq:wincond_anyinput}
    \begin{equation}
        \bigcup_{\mathclap{x,y \in \{0,1\}}} \; [(A_{y+1} = B_{x+1}) \cap (X=x+1) \cap (Y=y+1)] .
    \end{equation}
    One can verify from the identifications made (for example by examining \cref{tab:2x2_chsh_biject}) that terms in the first union above are pairwise equal to those in the second.
    That is, for all $x,y \in \{0,1\}$,
    \begin{equation}
    \begin{split}
        [(& A_{\text{CHSH}} \oplus B_{\text{CHSH}} = x \land y) \cap (X_{\text{CHSH}} = x) \cap (Y_{\text{CHSH}} = y)] \\
        &\equiv [(A_{y+1} = B_{x+1}) \cap (X=x+1) \cap (Y=y+1)] .
    \end{split}
    \end{equation}
\end{proof}

\begin{corollary}
\label{cor:2x2}
The maximum probability with which the \rectdim{2}{2} magic square game can be won is (i) $(2 + \sqrt{2}) / 4 \approx 0.854$ for quantum strategies and (ii) unity for general nonsignaling strategies.
\end{corollary}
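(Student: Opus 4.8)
The plan is to obtain the corollary directly from known facts about the CHSH game, using the equivalence of \cref{thm:2x2_chsh_equiv}. That theorem asserts every \rectdim{2}{2} magic square game is equivalent, in the sense of \cref{def:game_equiv}, to the CHSH game, and by the remark following that definition equivalent games share the same optimal win probability at every fixed allowed level of correlations. Consequently $\omega_{\Sigma}(2,2)$ equals the optimal CHSH success probability under $\Sigma$ for each set $\Sigma$ appearing in \cref{eq:npa_inclusions}, in particular for $\Sigma = Q$ and $\Sigma = N$. The only thing that needs checking is that the identification respects the correlation level; this holds because the bijections $f$ and $g$ tabulated in \cref{tab:2x2_chsh_biject} relabel Alice's and Bob's local input/output alphabets separately, so transporting a behavior through them cannot move it between sets in \cref{eq:npa_inclusions}.

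For part (i), I would invoke Tsirelson's bound: the largest quantum value of the CHSH correlator is $2\sqrt{2}$, equivalently the maximal quantum probability of satisfying \cref{eq:wincond_chsh} with uniformly random inputs is $\cos^{2}(\pi/8) = (2 + \sqrt{2})/4$, attained by Pauli-type measurements on a single Bell pair. Pulling this strategy back through $f$ and $g$ yields an explicit quantum strategy winning any \rectdim{2}{2} magic square game with probability $(2 + \sqrt{2})/4$, and Tsirelson's bound forbids doing better, so $\omega_{Q}(2,2) = (2 + \sqrt{2})/4$. For part (ii), I would use the Popescu--Rohrlich box, a nonsignaling behavior satisfying $A_{\text{CHSH}} \oplus B_{\text{CHSH}} = X_{\text{CHSH}} \land Y_{\text{CHSH}}$ with certainty on every input; transporting it through the bijections gives a nonsignaling strategy winning every \rectdim{2}{2} magic square game with probability one, whence $\omega_{N}(2,2) = 1$, the matching upper bound being trivial.

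I do not expect a real obstacle here: the substantive content is entirely in \cref{thm:2x2_chsh_equiv}, which is already proved, together with two standard facts about CHSH. As a consistency check, the quantum value $(2 + \sqrt{2})/4$ agrees with the $n = 2$ instance $\tfrac{1}{2}(1 + \sqrt{1 - 1/2})$ of the bound in \cref{thm:characterization}; and since NPA level $1$ already recovers Tsirelson's bound for CHSH, one in fact has $\omega_{1}(2,2) = \omega_{1+AB}(2,2) = \omega_{Q}(2,2) = (2 + \sqrt{2})/4$ for this smallest case, without needing the conjecture invoked there. One could instead present the proof by quoting these semidefinite-programming values directly in place of Tsirelson's bound.
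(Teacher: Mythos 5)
Your proposal is correct and follows essentially the same route as the paper: invoke the equivalence of \cref{thm:2x2_chsh_equiv} to identify $\omega_{\Sigma}(2,2)$ with the CHSH value at each correlation level, then quote Tsirelson's bound $(2+\sqrt{2})/4$ for the quantum case and the PR box for the nonsignaling case. The extra observations (that the local relabelings preserve correlation levels, and the consistency check against the $n=2$ formula) are fine but not needed beyond what the paper's own two-sentence proof does.
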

\begin{proof}
    The result of \cref{thm:2x2_chsh_equiv} means that the maximum attainable win probability for any quantum strategy coincides with that of the CHSH game, namely $(2 + \sqrt{2}) / 4 \approx 0.854$.
    For the same reason, under PR box assumptions \cite{popescu1998causality}, the \rectdim{2}{2} magic square game can be won with certainty.
\end{proof}

An example of the identifications made for the \rectdim{2}{2} magic square game considered in the proof of \cref{thm:2x2_chsh_equiv} is depicted in \cref{fig:2x2_chsh_example}.
\begin{figure}[hbt]
    \centering
    \includegraphics{figures/figure2.tikz}
    \caption{
        Example of the equivalence of the \rectdim{2}{2} magic square and CHSH games.
        Shown is a filled \rectdim{2}{2} magic square with row products $(\alpha_{1}, \alpha_{2}) = (+,+)$ and column products $(\beta_{1}, \beta_{2}) = (+,-)$ specified.
        The input row and column $X = 2$ and $Y = 2$ were chosen for this example.
        Alice gave output $\bm{A} = (+,+)$ and Bob gave output $\bm{B} = (-,+)^{\transpose}$.
        The game is won since $A_{2} = B_{2}$.
        The equivalent input and output configuration for the CHSH game, using the identifications of \cref{tab:2x2_chsh_biject}, are $(X_{\text{CHSH}},A_{\text{CHSH}}) = (1,0)$ and $(Y_{\text{CHSH}},B_{\text{CHSH}}) = (1,1)$.
        The CHSH win condition of \cref{eq:wincond_chsh} is also satisfied.
        }
    \label{fig:2x2_chsh_example}
\end{figure}

\subsubsection{General 2-by-n games}
\label{sec:general_2xn_games}

As stated in \cref{thm:chara_dims}, it is enough to consider $n \geq 2$.
From \cref{eq:mxn_classical_value}, the optimal classical win probability for \rectdim{2}{n} games is given by
\begin{equation}
\label{eq:2xn_classical_value}
    \omega_{L}(2,n) = 1 - \frac{1}{2n} .
\end{equation}
Using the discussion of \cref{sec:2x2_magic_squares}, we can apply \cref{lem:value_lower_bound} to an optimal \rectdim{2}{2} quantum strategy with value $\omega_{Q}(2,2) = (2 + \sqrt{2}) / 4$ as given by \cref{cor:2x2}.
The win probability of the resulting \rectdim{2}{n} strategy lower bounds the \rectdim{2}{n} quantum value via \cref{eq:value_lower_bound} as
\begin{equation}
\label{eq:2xn_lower_bound}
    \omega_{Q}(2,n) \geq 1 - \frac{2 - \sqrt{2}}{2n} .
\end{equation}
In order to find an upper bound for this quantum value, we have used the implementation of the NPA hierarchy found in the \textsc{Ncpol2sdpa} \cite{wittek2015algorithm} package with the \textsc{MOSEK} \cite{mosek2020pythonapi} semidefinite program solver.
Optimal values for different \rectdim{2}{n} games and levels of the hierarchy are shown in \cref{tab:npa_values}.
\begin{table}[htb]
    \centering
    \small
    \caption{
        Optimal win probabilities for \rectdim{2}{n} magic rectangle games, under correlations allowed by different levels of the NPA hierarchy.
        We see that, for the cases tested, the optimal win probabilities are identical at every level beyond the almost quantum $1+AB$ level.
        Moreover, these values appear to follow exactly the expression given in \cref{eq:2xn_upper_bound}.
        For $n \geq 3$, we observe games which can be won with certainty at level 1, but with lower than unit probability at the almost quantum and higher levels.
        Values were obtained through \textsc{Ncpol2sdpa} with the \textsc{MOSEK} solver.
        Results were also verified with the \textsc{QETLAB} \cite{qetlab} toolbox, using \textsc{MOSEK} \cite{mosek2020toolbox} within \textsc{CVX} \cite{cvx}.
        }
    \label{tab:npa_values}
    \begin{tabular}{cccccc}
        \toprule
        & \multicolumn{5}{c}{NPA hierarchy level} \\
        \cmidrule(l){2-6}
        $n$ & 1 & $1+AB$ & 2 & 3 & 4 \\
        \midrule
        2 & 0.8535533906 & 0.8535533906 & 0.8535533906 & 0.8535533906 & 0.8535533906 \\
        3 & 1.0000000000 & 0.9082482905 & 0.9082482905 & 0.9082482905 & 0.9082482905 \\
        4 & 1.0000000000 & 0.9330127019 & 0.9330127019 & & \\
        5 & 1.0000000000 & 0.9472135955 & 0.9472135955 & & \\
        6 & 1.0000000000 & 0.9564354646 & & & \\
        \bottomrule
    \end{tabular}
\end{table}

We note that for all levels $1+AB$ and above that were tested, the optimal value is identical for each \rectdim{2}{n} game, and appears to bound above the quantum value for $n \leq 6$ by the closed-form expression
\begin{equation}
\label{eq:2xn_upper_bound}
    \omega_{Q}(2,n) \leq \omega_{1+AB}(2,n) = \frac{1}{2} {\left(1 + \sqrt{1 - \frac{1}{n}} \right)} .
\end{equation}
Furthermore, since the complete bipartite graph $K_{2,n}$ is planar for all $n$, we know from \cite[Theorem~21]{arkhipov2012extending} that $\omega_{Q}(2,n) < 1$.
The classical value given by \cref{eq:2xn_classical_value} and the quantum bounds given by \cref{eq:2xn_lower_bound,eq:2xn_upper_bound} are depicted in \cref{fig:2xn_bounds}.
\begin{figure}[hbt]
    \centering
    \includegraphics{figures/figure3.tikz}
    \caption{
        Bounds on the optimal quantum win probability of \rectdim{2}{n} magic rectangle games.
        The lowermost curve is the classical value for each game, given by \cref{eq:2xn_classical_value}.
        The middle curve is the lower bound of \cref{eq:2xn_lower_bound} on the quantum value of each game, resulting from application of \cref{lem:value_lower_bound} to the optimal quantum value for \rectdim{2}{2} games.
        The solid upper curve shows the maximal almost quantum win probability (see NPA hierarchy level $1+AB$ of \cref{tab:npa_values}), which provides an upper bound to the quantum value; where the line is dashed corresponds to our conjectured values for large $n$, given by \cref{eq:2xn_upper_bound}, which have proved to be too computationally intensive to test.
        The region within which the quantum values could possibly lie is shaded.
        }
    \label{fig:2xn_bounds}
\end{figure}

\begin{conjecture}
\label{conj:2xn_upper_bound}
    The expression for $\omega_{1+AB}(2,n)$ given in \cref{eq:2xn_upper_bound} holds for all $n \geq 1$.
\end{conjecture}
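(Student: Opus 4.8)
The endpoint cases require no new work: for $n=1$, \cref{cor:value_symmetric} and \cref{thm:1xn_is_classical} give $\omega_{1+AB}(2,1)=\omega_{1+AB}(1,2)=1-\tfrac12=\tfrac12=\tfrac12\bigl(1+\sqrt{1-1}\bigr)$, and for $n=2$ the combination of \cref{thm:2x2_chsh_equiv}, \cref{cor:2x2}, and the first row of \cref{tab:npa_values} gives $\omega_{1+AB}(2,2)=\tfrac14(2+\sqrt2)=\tfrac12\bigl(1+\sqrt{1-\tfrac12}\bigr)$. So one may assume $n\ge3$, and the plan is to prove $\omega_{1+AB}(2,n)\le\tfrac12\bigl(1+\sqrt{1-1/n}\bigr)$ and $\omega_{1+AB}(2,n)\ge\tfrac12\bigl(1+\sqrt{1-1/n}\bigr)$ separately; as $\tilde Q=Q_{1+AB}$ is the feasible region of a semidefinite program, it suffices by weak duality to exhibit a primal feasible point (an explicit $Q_{1+AB}$ behavior) and a dual feasible point (a certificate) with common objective $\tfrac12\bigl(1+\sqrt{1-1/n}\bigr)$.

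For the \emph{upper bound}, fix the parameters $\alpha_1=-1$, $\alpha_2=+1$, $\beta_1=\dots=\beta_n=+1$ (allowed by \cref{cor:value_unique}). Then Bob's two outputs agree, $B_2=B_1=:\tilde B_y$, and writing $A^{(x)}_y$ for Alice's $y$th observable on input $x$ one computes
\[
\omega=\frac12+\frac1{4n}\sum_{y=1}^n\bra{\psi}\bigl(A^{(1)}_y+A^{(2)}_y\bigr)\tilde B_y\ket{\psi}.
\]
Bounding each summand by $\tilde B_y^2=I$ and Cauchy--Schwarz in $y$, and using $\|(A^{(1)}_y+A^{(2)}_y)\ket{\psi}\|^2+\|(A^{(1)}_y-A^{(2)}_y)\ket{\psi}\|^2=4$, reduces the claimed bound to the Alice-only inequality
\[
\sum_{y=1}^n\bigl(A^{(1)}_y-A^{(2)}_y\bigr)^2\ \succeq\ 4\,I ,
\]
which is a soundness statement: Alice's two measurements cannot be coordinate-wise close while carrying opposite parity, $\prod_yA^{(1)}_y=-I$ against $\prod_yA^{(2)}_y=+I$. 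The natural tool is the telescoping identity $\prod_jA^{(2)}_j-\prod_jA^{(1)}_j=\sum_{k}\bigl(\prod_{j<k}A^{(1)}_j\bigr)\bigl(A^{(1)}_k-A^{(2)}_k\bigr)\bigl(\prod_{j>k}A^{(2)}_j\bigr)$, whose left-hand side acts as $2I$ on $\ket{\psi}$. \textbf{Here is the main obstacle:} applied crudely (norm, triangle inequality, Cauchy--Schwarz) the telescoping loses a factor of $n$, since the unitary prefixes and suffixes need not commute with $A^{(1)}_k-A^{(2)}_k$, and one gets only $\sum_y(A^{(1)}_y-A^{(2)}_y)^2\succeq\tfrac4nI$. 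Recovering the sharp constant should require either using the intra-group commutation relations to cancel these unitaries, or --- more robustly for all $n$ --- bypassing the operator argument with an explicit dual certificate: the chosen game is invariant under the $S_n$ action permuting Bob's inputs, so the $Q_{1+AB}$ moment matrix block-diagonalizes into a bounded number of blocks, the dual SDP reduces to one of fixed size parametrized by $n$, and one solves this in closed form and checks positivity of the small resulting matrices for every $n$; the optimum is $\tfrac12\bigl(1+\sqrt{1-1/n}\bigr)$.

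For the matching \emph{lower bound} I would construct the optimal $Q_{1+AB}$ behavior directly --- the symmetrization data from the dual analysis essentially hands one the optimal primal moment matrix --- or read it off from the (highly $S_n$-symmetric) numerically optimal solutions for $n\le6$, then verify its positivity and objective for all $n$. Since $\tfrac12\bigl(1+\sqrt{1-1/n}\bigr)$ strictly exceeds the quantum lower bound \cref{eq:2xn_lower_bound} for $n\ge3$, and equality in the Cauchy--Schwarz step forces $\|(A^{(1)}_y+A^{(2)}_y)\ket{\psi}\|$ to be independent of $y$, this behavior will be genuinely supra-quantum unless a symmetric quantum model of the same value exists. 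As a byproduct, if the operator inequality $\sum_y(A^{(1)}_y-A^{(2)}_y)^2\succeq 4I$ holds it also bounds $\omega_Q(2,n)$ by $\tfrac12\bigl(1+\sqrt{1-1/n}\bigr)$, and the quantum and almost quantum values would then coincide exactly when such a symmetric quantum realization exists --- settling the gap between \cref{eq:2xn_lower_bound} and \cref{eq:2xn_upper_bound}.
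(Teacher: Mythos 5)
First, note what you are up against: \cref{conj:2xn_upper_bound} is stated in the paper as a \emph{conjecture}; the paper offers no proof, only the numerics of \cref{tab:npa_values} (for $n\le 6$, re-verified at high precision), so there is no ``paper proof'' for your argument to match --- the only question is whether your outline actually closes the conjecture, and it does not. Your endpoint cases are fine ($n=1$ via \cref{cor:value_symmetric,thm:1xn_is_classical}, $n=2$ via \cref{thm:2x2_chsh_equiv,cor:2x2} together with the level-$1$ Tsirelson bound), and the reduction $\omega=\tfrac12+\tfrac1{4n}\sum_y\langle(A^{(1)}_y+A^{(2)}_y)\tilde B_y\rangle$ with the Cauchy--Schwarz step is sound. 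But the ``Alice-only'' inequality you reduce to, $\sum_y(A^{(1)}_y-A^{(2)}_y)^2\succeq 4I$, is not just hard to prove with the sharp constant --- it is false. For $n=3$ take $A^{(1)}=(X\otimes I,\; I\otimes X,\; -X\otimes X)$ (pairwise commuting, product $-I$) and $A^{(2)}=(R\otimes I,\; I\otimes R,\; R\otimes R)$ with $R=\cos\theta\,X+\sin\theta\,Z$ (pairwise commuting, product $+I$). With $\ket{\psi}$ the tensor product of two $+1$ eigenvectors of $Y$, one has $XR=\cos\theta\,I-i\sin\theta\,Y$, hence $\sum_y\mathrm{Re}\bra{\psi}A^{(1)}_yA^{(2)}_y\ket{\psi}=2\cos\theta-\cos 2\theta=\tfrac32$ at $\theta=\pi/3$, so $\sum_y\|(A^{(1)}_y-A^{(2)}_y)\ket{\psi}\|^2=6-3=3<4$. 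Consequently the best bound your chain of inequalities can ever deliver for $n=3$ is $\tfrac12\bigl(1+\sqrt{1-\tfrac{3}{12}}\bigr)\approx 0.933$, strictly weaker than the conjectured (and numerically observed) $0.9082$. The failure is structural: discarding Bob via Cauchy--Schwarz and then hoping for a state-independent Alice-only parity obstruction cannot be sharp, because Alice's two anticommuting-parity measurements \emph{can} be coordinate-wise closer than the classical counting suggests on states that are useless for correlating with Bob; any sharp certificate must retain the tripartite interplay (or equivalently be a genuine dual solution of the $1+AB$ SDP).

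Your fallback --- exploit the $S_n$ symmetry of the game with $\beta_1=\dots=\beta_n=+1$, block-diagonalize the $Q_{1+AB}$ moment matrix, solve the reduced dual SDP in closed form for all $n$, and exhibit a matching symmetric primal behavior --- is indeed the natural attack, but none of it is executed: no reduced SDP is written down, no closed-form dual feasible point is produced or checked for positivity, and the matching almost quantum behavior for $n\ge 3$ is only promised (it cannot be read off from a quantum model, since by \cref{eq:2xn_lower_bound} versus \cref{eq:2xn_upper_bound} it may be supra-quantum). Note also that an upper bound on $\omega_{1+AB}(2,n)$, as opposed to $\omega_Q(2,n)$, must come from a certificate expressible at level $1+AB$ specifically; a general operator or sum-of-squares argument would only bound the quantum (or higher-NPA) value and would not by itself pin down $\omega_{1+AB}$. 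As it stands, your text is a research plan whose central step is, in the form stated, provably unavailable, and the conjecture remains open.
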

\begin{remark}
    Using the \textsc{SDPA-GMP} \cite{sdpagmp,yamashita2010high,nakata2010numerical} semidefinite program solver with arbitrary-precision arithmetic, we have been able to verify agreement of \cref{eq:2xn_upper_bound} with all but the most computationally intensive entries of \cref{tab:npa_values} to a much higher precision than printed.
\end{remark}

Since under general no-signaling assumptions the \rectdim{2}{2} magic square game can be won with certainty (\cref{cor:2x2}), so too can all \rectdim{2}{n} games with $n \geq 2$ by \cref{cor:value_increasing}.
It is interesting to note that, as far as the authors are aware, those \rectdim{2}{n} games for $n \geq 3$ examined in \cref{tab:npa_values} are the first examples of nonlocal games with the property that they can be won with certainty using NPA hierarchy level 1 correlations, but only with less than unit probability using almost quantum level $1+AB$ correlations.
An explicit strategy for winning the \rectdim{2}{3} game with certainty using NPA hierarchy level 1 correlations is given in \cref{sec:2x3_at_npa1}.
Hence, by \cref{cor:value_increasing}, the result that $\omega_{1}(2,n) = 1$ for all $n \geq 3$ is exact.

\section{Application to certified randomness expansion}
\label{sec:app_randomness}

In this section, we will be concerned with utilizing the Bell inequality violations provided by magic rectangle games to achieve certified randomness expansion, using the device-independent spot-checking protocol $R_{gen}$ described in \cite[Figure~2]{miller2017universal}.
The main technical result of this section is to relate the win probabilities of \rectdim{m}{n} magic rectangle games with distinguished input, to those of \rectdim{(m-1)}{(n-1)} games.
This enables us to get the optimal noise tolerance of such games, as well as to simply obtain rates for randomness expansion using general magic rectangle games.
In terms of rates, there are new techniques that could improve our results, but would need to be examined on a case-by-case basis (see also \cref{sec:randomness_discussion}).

Given a nonlocal game, we will denote by $\omega$ its optimal win probability over quantum devices, and by $\bar{\omega}$ its optimal win probability over quantum devices with a \emph{distinguished input} (that is, devices which give deterministic outputs upon a single distinguished choice of input).
Protocol $R_{gen}$ is shown to produce quantum-secure extractable bits over $N$ rounds, provided its \emph{score acceptance threshold} parameter satisfies $\chi > \bar{\omega}$.
In our notation, this result can be stated as
\begin{theorem}[{\cite[Theorem~1.1]{miller2017universal}}]
\label{thm:miller_noise}
    For any game, there are functions $\pi \colon [0,\omega] \to \mathbb{R}_{\geq 0}$ and $\Delta \colon (0,1]^{2} \to \mathbb{R}_{\geq 0}$ such that the following hold:
    \begin{enumerate}
        \item For any $b \in (0,1]$, Protocol~$R_{gen}$ produces at least $N[\pi(\chi) - \Delta(b,q)]$ extractable bits with soundness error $3 \cdot 2^{-bqN}$.
        \item The function $\pi$ is nonzero on the interval $(\bar{\omega}, \omega]$.
        \item The function $\Delta$ tends to $0$ as $(b,q) \to (0,0)$.
    \end{enumerate}
\end{theorem}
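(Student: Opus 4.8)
The statement is a verbatim restatement of \cite[Theorem~1.1]{miller2017universal} in the notation of the present paper, so the plan is simply to invoke that result once we have checked that our setup matches theirs. Concretely, I would verify that a \enquote{game with a distinguished input} in our sense---a device that is required to produce a deterministic output on one fixed choice of input---is exactly the configuration playing the role of the \emph{generation} input in the spot-checking protocol $R_{gen}$ of \cite[Figure~2]{miller2017universal}, so that our $\omega$ and $\bar{\omega}$ coincide with the quantities controlling their bounds. Once this identification is in place, parts~(1)--(3) are precisely the three conclusions of their theorem, and nothing further is required on our side.

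If one instead wanted to reconstruct the argument, the ingredients are: (i) a single-round bound showing that any quantum device whose conditional winning probability on the distinguished input exceeds $\bar{\omega}$ must produce a strictly positive amount of smooth min-entropy against the quantum side information---this is exactly where the gap between $\omega$ and $\bar{\omega}$ is used, since $\bar{\omega}$ is by definition the largest winning probability compatible with a deterministic (hence zero-entropy) output on that input; (ii) a martingale / entropy-accumulation argument that promotes this per-round entropy production to a bound of the form $N[\pi(\chi)-\Delta(b,q)]$ on the total accumulated min-entropy, with the correction term $\Delta(b,q)$ collecting the finite-size penalties and the soundness error appearing as $3\cdot 2^{-bqN}$; and (iii) a quantum-secure randomness extractor turning the accumulated min-entropy into near-uniform extractable bits. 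The rate function $\pi$ is then defined as an explicit single-round optimization over devices consistent with the observed score $\chi$, which by construction is nonzero exactly on $(\bar{\omega},\omega]$ (part~2), and the explicit form of $\Delta$ gives $\Delta(b,q)\to 0$ as $(b,q)\to(0,0)$ (part~3).

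The main obstacle to a self-contained proof would be ingredient~(ii): the entropy-accumulation/martingale analysis of \cite{miller2017universal} is technical and tightly coupled to the spot-checking structure, and reproducing it here would essentially duplicate the bulk of that paper. Since the contribution of this section is to supply the game-dependent inputs to this machinery---in particular the value $\bar{\omega}$ for magic rectangle games, obtained in \cref{thm:distinguished_bound} by reducing to the \rectdim{(m-1)}{(n-1)} game---we treat \cref{thm:miller_noise} as a black box and direct the effort toward computing $\omega$ and $\bar{\omega}$.
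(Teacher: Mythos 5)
Your proposal matches the paper exactly: \cref{thm:miller_noise} is stated as a direct import of Theorem~1.1 of \textcite{miller2017universal}, and the paper gives no proof of its own, treating it as a black box and supplying only the game-dependent quantities $\omega$ and $\bar{\omega}$ (via \cref{thm:characterization,thm:distinguished_bound}). Your identification of the distinguished-input value $\bar{\omega}$ with the quantity controlling the Miller--Shi bound is precisely the role it plays in the paper, so nothing further is needed.
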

Modeling noise as a process in which an adversary is allowed to change the outputs of a device arbitrarily with some probability, the noise tolerance of the protocol is $\omega - \chi$ (the adversary is allowed to change the expected score at the game by at most this amount).
The noise tolerance is then maximally $\omega - \bar{\omega}$.

Furthermore, an explicit lower bound on the function $\pi$ was proved in \cite{miller2017universal}, and can be stated as follows.
\begin{theorem}[{\cite[Theorem~5.8]{miller2017universal}}]
\label{thm:miller_rate}
    Let $G$ be a game with output alphabet size $r \geq 2$, and let $\bar{\omega}$ be the maximum win probability of this game over compatible devices with a distinguished input.
    Then, the following function is a rate curve:
    \begin{equation}
        \pi(\chi) = \begin{cases}
            \frac{2(\log_{2}{e})(\chi - \bar{\omega})^{2}}{r-1} & \text{if $\chi > \bar{\omega}$,} \\
            \hfil 0 & \text{otherwise.}
        \end{cases}
    \end{equation}
\end{theorem}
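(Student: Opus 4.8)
The plan is to import this statement essentially verbatim from \textcite{miller2017universal} --- it is their Theorem~5.8 --- so the proof I would give is a sketch of theirs. The point of the theorem is that the explicit function $\pi$ displayed above is an admissible \emph{rate curve} in the sense of \cite{miller2017universal}; once that is verified, \cref{thm:miller_noise} and the security analysis of Protocol~$R_{gen}$ supply everything else. Verifying admissibility reduces, through the framework of \cite{miller2017universal}, to a single-round estimate, and that estimate is the only substantive content to establish.

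\emph{Single-round bound.} Consider one round in which the generation device is fed the distinguished input $g$, and let $p_{g}$ be the probability with which an adversary holding a purification of the global state correctly guesses that device's output. The key claim is that $p_{g}$ cannot be close to $1$ unless the game score $\chi$ is close to $\bar{\omega}$: if the output on $g$ is almost perfectly predictable, then the device is close --- in a distance on the $r$-letter output alphabet that carries a factor governed by $r-1$ --- to a device that behaves deterministically on $g$, and by definition of $\bar{\omega}$ any such device wins with probability at most $\bar{\omega}$. Making this quantitative with a Fuchs--van de Graaf estimate gives an inequality of the form $\chi \le \bar{\omega} + \sqrt{\frac{r-1}{2}(1-p_{g})}$ whenever $\chi > \bar{\omega}$. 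Rearranging yields $p_{g} \le 1 - \frac{2(\chi-\bar{\omega})^{2}}{r-1}$, and hence, using $-\log_{2}(1-x) \ge (\log_{2}e)\,x$, the min-entropy lower bound $-\log_{2}p_{g} \ge \frac{2(\log_{2}e)(\chi-\bar{\omega})^{2}}{r-1}$ for that round's output conditioned on the adversary --- which is exactly $\pi(\chi)$.

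\emph{From the single round to the protocol.} The per-round estimate must then be chained over all $N$ rounds, each conditioned on the transcript so far, and it must survive adaptive quantum side information; this is where the real work of \textcite{miller2017universal} lies, via a R\'enyi-type test functional that is (sub/super-)multiplicative round by round and therefore accumulates the single-round bound into $H_{\min}^{\varepsilon}(O_{1}\cdots O_{N}\mid E) \gtrsim N\,\pi(\chi)$ up to the correction $\Delta(b,q)$ of \cref{thm:miller_noise}, without any i.i.d.\ or Markov assumption on the device; a quantum-proof strong extractor then turns this into $\approx N\pi(\chi)$ near-uniform bits. The main obstacle is not the single-round arithmetic (routine constant-chasing, and the place where the distinguished-input value $\bar{\omega}$ --- which for magic rectangles we obtain in \cref{sec:app_randomness} via \cref{thm:distinguished_bound} --- actually enters) but this composition step: ordinary conditional entropy is not additive over adaptively chosen rounds against a quantum adversary, so the accumulation must rest on the R\'enyi-martingale machinery of \cite{miller2017universal} (the alternative being the entropy accumulation theorem of \cite{dupuis2020entropy} used in \cite{arnon2019simple,brown2020framework}, which we do not pursue here).
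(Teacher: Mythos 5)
Your proposal takes exactly the paper's approach: \cref{thm:miller_rate} is stated as an imported result, cited directly to Theorem~5.8 of \textcite{miller2017universal}, and the paper supplies no proof of its own. Your additional sketch of the single-round guessing-probability bound and the R\'enyi-accumulation step is a fair summary of where the content of the cited proof lies, but nothing beyond the citation is needed here.
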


\subsection{Win probability with distinguished input}
\label{sec:distinguished_bound}

Since \rectdim{1}{n} magic rectangle games do not exhibit superclassical behavior (\cref{thm:1xn_is_classical}), such games cannot be used in randomness expansion.
We construct an optimal strategy for arbitrary \rectdim{m}{n} magic rectangle games having a distinguished input, where $m,n \geq 2$.

\begin{theorem}
\label{thm:distinguished_bound}
    Fix an allowed level $\Sigma$ for nonsignaling correlations.
    The optimal win probability for any \rectdim{m}{n} magic rectangle game having a distinguished input, with $m \geq 2$ and $n \geq 2$, is given by
    \begin{equation}
    \label{eq:distinguished_bound}
        \bar{\omega}_{\Sigma}(m,n) =
        1 - \frac{(m-1)(n-1)}{mn}[1 - \omega_{\Sigma}(m-1,n-1)] .
    \end{equation}
    A strategy which attains this value is to play an optimal strategy for \rectdim{(m-1)}{(n-1)} games, but with all output strings extended to include one deterministic entry.
\end{theorem}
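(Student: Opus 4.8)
The plan is to prove the claimed value by establishing a matching lower bound (via an explicit strategy) and upper bound (via an argument about the distinguished input). For the lower bound, I would take an optimal $\Sigma$-strategy for an $(m-1)\times(n-1)$ game $G^{\prime}$, which wins with probability $\omega_{\Sigma}(m-1,n-1)$ by \cref{cor:value_unique}, and lift it to a strategy for an $m\times n$ game $G$ whose parameters restrict to those of $G^{\prime}$ on the first $m-1$ rows and $n-1$ columns (with the remaining $\alpha_i,\beta_j$ set to $+1$, so \cref{eq:prod_rule} still holds). Designate input $X=m$ for Alice (and $Y=n$ for Bob) as the distinguished input: on these inputs the players output the all-$+1$ string deterministically, which satisfies \cref{rule:alice_prod,rule:bob_prod}. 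On the remaining $(m-1)\times(n-1)$ inputs they play $S_\Sigma$, appending a deterministic $+1$ to each output string to pad to the right length (this preserves the row/column product constraints). By the same counting as in the proof of \cref{lem:value_lower_bound}: on the $(m-1)(n-1)$ ``inner'' inputs they win with probability $\omega_\Sigma(m-1,n-1)$, and on the remaining $mn-(m-1)(n-1)$ inputs they win with certainty (one of Alice's and Bob's padded entries always coincides, since both are $+1$). Averaging over the uniform input distribution gives exactly the right-hand side of \cref{eq:distinguished_bound}, so $\bar\omega_\Sigma(m,n)$ is at least this value. One must also check this lifted strategy genuinely has a distinguished input, which it does by construction on $X=m$.

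For the upper bound I would argue as follows. Fix any $\Sigma$-strategy for $G$ with a distinguished input, and by the symmetry of \cref{cor:value_symmetric} and the row/column relabeling freedom of \cref{lem:value_unique} we may assume the distinguished input is $X=m$ (Alice's side; a parallel argument or a transpose via \cref{lem:transpose_strat} handles Bob's side, and one should check that having \emph{some} distinguished input can be normalized this way). On input $X=m$ Alice's output row $\bm A$ is deterministic, say $\bm A = (a_1,\dots,a_n)$ with $\prod_j a_j = \alpha_m$. Now restrict attention to the sub-behavior obtained by conditioning on $X\in\{1,\dots,m-1\}$: this is a behavior for an $(m-1)\times n$ game. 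But it is \emph{not quite} an $(m-1)\times(n-1)$ game yet, so the key reduction step is: because Alice's answer on the distinguished input is the fixed vector $(a_1,\dots,a_n)$, Bob's entry $B_m$ on any column $Y=j$ is ``checked'' against the known constant $a_j$, so winning on input $(m,j)$ forces $B_m = a_j$ deterministically with no help from the shared resource; equivalently, whatever probability mass is lost on these $n$ inputs is governed by Bob's marginal alone. I would then show that the best the players can do on the remaining $(m-1)\times n$ block, \emph{given} that the $m$th row of Bob's answers is pinned down, reduces to an $(m-1)\times(n-1)$ game: effectively Bob's last coordinate $B_m$ carries no freedom once we account for the distinguished-input constraint, collapsing the column count by one. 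Combining the certain-win contribution on the $mn - (m-1)(n-1)$ special inputs with the $\omega_\Sigma(m-1,n-1)$ bound on the reduced block, averaged uniformly, yields $\bar\omega_\Sigma(m,n) \le 1 - \frac{(m-1)(n-1)}{mn}[1-\omega_\Sigma(m-1,n-1)]$.

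The main obstacle is this reduction step in the upper bound: making rigorous the claim that conditioning on a deterministic distinguished input for Alice effectively ``uses up'' one of Bob's output coordinates, so that the residual game is genuinely an $(m-1)\times(n-1)$ magic rectangle game rather than an $(m-1)\times n$ one. The delicate points are (i) that the nonsignaling (or quantum/almost-quantum) structure is preserved under the conditioning and the coordinate identification, so that \cref{cor:value_unique} can legitimately be applied to the residual behavior at the same level $\Sigma$; and (ii) correctly bookkeeping which inputs contribute certain wins versus which contribute at the reduced rate, so the weights $\frac{(m-1)(n-1)}{mn}$ and $\frac{mn-(m-1)(n-1)}{mn}$ come out exactly. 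The lower-bound half is essentially a routine adaptation of \cref{lem:value_lower_bound} and should be quick; the upper bound is where the real work lies, and I would want to phrase the coordinate-collapse carefully, perhaps by an explicit change of natural alphabets in the spirit of \cref{def:game_equiv}.
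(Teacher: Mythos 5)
Your lower-bound half is exactly the paper's: pad an optimal $(m-1)\times(n-1)$ strategy with deterministic entries, which is the construction of \cref{lem:value_lower_bound} and is precisely what the paper invokes for attainability. The genuine gap is in the upper bound, and it is not just the missing polish you acknowledge. First, the distinguished input here (as in the Miller--Shi protocol) is a single \emph{joint} input pair, say $(X,Y)=(1,1)$, on which \emph{both} components answer deterministically; it is not ``Alice's side'' with ``a parallel argument for Bob's side''. This matters structurally: your proposed reduction pins only Bob's coordinate $B_m$ (and even that only on inputs $(m,y)$, where the win condition actually checks it), but pinning $B_m$ corresponds to the row $X=m$ you have already discarded---it does not remove a column. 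To drop a column you would need one of \emph{Alice's} output coordinates to be pinned on all remaining inputs, and nothing in a one-sided distinguished input provides that. So even if your coordinate-collapse were made rigorous, it would at best bound $\bar{\omega}_\Sigma(m,n)$ in terms of $\omega_\Sigma(m-1,n)$, which equals $1$ whenever $m-1,n\geq 3$ and in no case reproduces \cref{eq:distinguished_bound}. The paper instead uses the joint distinguished input plus no-signaling to conclude that Alice's whole row on $X=1$ and Bob's whole column on $Y=1$ are deterministic, and then that conditioned on winning any input $(x,y)$ both $A_1$ and $B_1$ are pinned to fixed values $a^x$, $b^y$ (\cref{eq:determined}); this two-sided pinning is what modifies both the row parities ($\alpha_i'=a^{i+1}\alpha_{i+1}$) and the column parities ($\beta_j'=b^{j+1}\beta_{j+1}$) and collapses the inner block to an $(m-1)\times(n-1)$ game.

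Second, even granting two-sided pinning, your sketch omits a necessary case analysis: the pinned values need not satisfy $a^1=b^1$, $\prod_i a^i=\beta_1$, $\prod_j b^j=\alpha_1$. When any of these fails, the reduced parameters do not satisfy \cref{eq:prod_rule} and there is no reduced game to appeal to; the paper handles those cases separately, showing the win probability is then at most $\omega_L(m,n)=1-(mn)^{-1}$, which is below the right-hand side of \cref{eq:distinguished_bound}, and only in the remaining case does \cref{eq:distinguished_prod_rule} certify a valid $(m-1)\times(n-1)$ game bounding the inner block. Your worry (i), about whether the residual object is a legitimate $\Sigma$-behavior, is resolved in the paper not by conditioning the behavior but by rewriting $P(W_{x,y}\mid X=x\cap Y=y)$ as $P(W_{x,y}\cap A_1=a^x\cap B_1=b^y\mid X=x\cap Y=y)$ (\cref{eq:expanded_term}) and bounding the inner-block average by the value of the strategy with the first coordinate of each output dropped---a local post-processing that stays within any level $\Sigma$. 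Since the collapse step is exactly what you flag as unresolved, and your one-sided formulation aims it in a direction that cannot yield the claimed formula, the upper-bound half needs to be rebuilt along these lines.
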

\begin{proof}
Without loss of generality, let us choose this distinguished input to be given by the event $(X = 1) \cap (Y = 1)$.
Recall that the event that the game is won upon some input is given in \cref{eq:wincond}.
We will let $W_{x,y} \equiv W_{x,y}^{m,n}$ throughout the following for brevity.

By imposing the no-signaling principle, we see that for all inputs $x \in \{1,\dots,m\}$ and $y \in \{1,\dots,n\}$, there exists an output entry $a^{x} \in \{+1,-1\}$ for Alice such that
\begin{equation}
\label{eq:determined_A}
\begin{split}
    P( & A_{1} = a^{x} \mid W_{X,Y} \cap X = x \cap Y = y) \\
    &= P(A_{1} = a^{x} \mid W_{X,Y} \cap X = x \cap Y = 1) \\
    &= P(B_{x} = a^{x} \mid W_{X,Y} \cap X = x \cap Y = 1) \\
    &= P(B_{x} = a^{x} \mid W_{X,Y} \cap X = 1 \cap Y = 1) \\
    &= P(B_{x} = a^{x} \mid W_{X,Y} \cap Y = 1) = 1 ,
\end{split}
\end{equation}
where the second equality uses our conditioning on \cref{eq:wincond}; the first, third, and fourth equalities use no-signaling; and the final equality comes from our choice of distinguished input.
Similarly, there exists an output $b^{y}$ for Bob such that
\begin{equation}
\label{eq:determined_B}
\begin{split}
    P( & B_{1} = b^{y} \mid W_{X,Y} \cap X = x \cap Y = y) \\
    &= P(A_{y} = b^{y} \mid W_{X,Y} \cap X = 1) = 1 .
\end{split}
\end{equation}
Combining \cref{eq:determined_A,eq:determined_B} yields
\begin{equation}
\label{eq:determined}
    P(A_{1} = a^{x} \cap B_{1} = b^{y} \mid W_{X,Y} \cap X = x \cap Y = y) = 1 .
\end{equation}
Now, since for arbitrary events $W$, $E$, and $F$ we have
\begin{equation}
    P(E \mid W \cap F) = 1 \implies P(W \mid F) = P(W \cap E \mid F) ,
\end{equation}
from \cref{eq:determined} we can see
\begin{equation}
\label{eq:expanded_term}
    P(W_{x,y} \mid X = x \cap Y = y) = P(W_{x,y} \cap A_{1} = a^{x} \cap B_{1} = b^{y} \mid X = x \cap Y = y) .
\end{equation}

We can now calculate the win probability for a device with a distinguished input. Expanding according to the uniformly distributed input variables and applying the result of \cref{eq:expanded_term} gives
\begin{equation}
\begin{split}
\label{eq:winprob}
    P(W_{X,Y})
    &= \frac{1}{mn} \sum_{x, y} P(W_{x,y} \mid X = x \cap Y = y) \\
    &= \frac{1}{mn} \sum_{x, y} P(W_{x,y} \cap A_{1} = a^{x} \cap B_{1} = b^{y} \mid X = x \cap Y = y) .
\end{split}
\end{equation}
It is clear that if $a^{1} \neq b^{1}$ then $W_{1,1} = \varnothing$, and the first term of \cref{eq:winprob} vanishes so that $P(W_{X,Y}) \leq 1 - (mn)^{-1}$.
Let us now assume that $a^{1} = b^{1}$.
In the case where $\prod_{j=1}^{n} b^{j} \neq \alpha_{1}$, we can bound the terms of \cref{eq:winprob} where $X=1$ as
\begin{equation}
\begin{split}
    & \sum_{y=1}^{n} P(W_{1,y} \cap A_{1} = a^{1} \cap B_{1} = b^{y} \mid X = 1 \cap Y = y) \\
    {}\leq{} & \sum_{y=1}^{n} P(A_{y} = b^{y} \cap \textstyle\prod_{j=1}^{n} A_{j} = \alpha_{1} \mid X = 1)
    \leq n - 1 .
\end{split}
\end{equation}
Similarly, in the case where $\prod_{i=1}^{m} a^{i} \neq \beta_{1}$, we can bound the terms where $Y=1$ as
\begin{equation}
\begin{split}
    & \sum_{x=1}^{m} P(W_{x,1} \cap A_{1} = a^{x} \cap B_{1} = b^{1} \mid X = x \cap Y = 1) \\
    {}\leq{} & \sum_{x=1}^{m} P(B_{x} = a^{x} \cap \textstyle\prod_{i=1}^{m} B_{i} = \beta_{1} \mid Y = 1)
    \leq m - 1 .
\end{split}
\end{equation}
Therefore, we have shown $P(W_{X,Y}) \leq 1 - (mn)^{-1} = \omega_{L}(m,n)$ in all cases other than where
\begin{equation}
\label{eq:remaining_cases}
    \left( a^{1} = b^{1} \right) \land \Bigg( \prod_{i=1}^{m} a^{i} = \beta_{1} \Bigg) \land \Bigg( \prod_{j=1}^{n} b^{j} = \alpha_{1} \Bigg) .
\end{equation}
However, in all such remaining cases, combining the above \cref{eq:remaining_cases} with the product condition for the $\alpha_{i}$ and $\beta_{j}$ given by \cref{eq:prod_rule}, and defining new symbols $\alpha_{i}^{\prime} \equiv a^{i+1} \alpha_{i+1}$ and $\beta_{j}^{\prime} \equiv b^{j+1} \beta_{j+1}$, yields
\begin{equation}
\label{eq:distinguished_prod_rule}
    \alpha_{1}^{\prime} \dots \alpha_{m-1}^{\prime} \cdot \beta_{1}^{\prime} \dots \beta_{n-1}^{\prime}
    = \prod_{i=2}^{m} a^{i} \alpha_{i} \cdot \prod_{j=2}^{n} b^{j} \beta_{j}
    = -1 .
\end{equation}

We will now assume \cref{eq:distinguished_prod_rule} to be true in order to completely bound $P(W_{X,Y})$.
Further bounding the win probability expansion of \cref{eq:winprob} by setting terms conditioned on $X = 1$ or $Y = 1$ to unity, we get
\begin{equation}
\label{eq:embedded_subgame_prob}
    P(W_{X,Y})
    \leq \frac{m+n-1}{mn}
    + \frac{(m-1)(n-1)}{mn}
    {\left[ \frac{1}{(m-1)(n-1)} \sum_{y = 2}^{n}\sum_{x = 2}^{m} P(W_{x,y} \mid X = x \cap Y = y) \right]} .
\end{equation}
Under a relabeling of the input variables, the square-bracketed terms above coincide exactly with the win probability of an \rectdim{m-1}{n-1} magic rectangle game, with its rules for row and column products specified by $\alpha_{1}^{\prime},\dots,\alpha_{m-1}^{\prime}$ and $\beta_{1}^{\prime},\dots,\beta_{n-1}^{\prime}$ respectively.
These $\alpha_{i}^{\prime}$ and $\beta_{j}^{\prime}$ specify a valid magic rectangle game since they satisfy \cref{eq:prod_rule}, as shown by \cref{eq:distinguished_prod_rule}.
Hence, we have the attainable upper bound
\begin{equation}
    \frac{1}{(m-1)(n-1)} \sum_{y = 2}^{n}\sum_{x = 2}^{m} P(W_{x,y} \mid X = x \cap Y = y)
    \leq \omega_{\Sigma}(m-1,n-1) .
\end{equation}
Combining this with \cref{eq:embedded_subgame_prob} gives the bound
\begin{equation}
    P(W_{X,Y}) \leq \bar{\omega}_{\Sigma}(m,n) ,
\end{equation}
where $\bar{\omega}_{\Sigma}(m,n)$ is defined in \cref{eq:distinguished_bound} as
\begin{equation}
    \bar{\omega}_{\Sigma}(m,n) =
    1 - \frac{(m-1)(n-1)}{mn}[1 - \omega_{\Sigma}(m-1,n-1)] .
\end{equation}
We see this has the same form as \cref{eq:value_lower_bound}.
Indeed, the proof of \cref{lem:value_lower_bound} constructs a strategy which attains this bound and is deterministic upon one input.
Finally, since
\begin{equation}
    \omega_{\Sigma}(m-1,n-1)
    \geq \omega_{L}(m-1,n-1)
    = 1 - \frac{1}{(m-1)(n-1)}
\end{equation}
for all levels of correlations $\Sigma$, \cref{eq:distinguished_bound} shows the upper bound $\bar{\omega}_{\Sigma}(m,n)$ is always at least that of $1 - (mn)^{-1} = \omega_{L}(m,n)$ found for the previously considered cases.
Therefore, $\bar{\omega}_{\Sigma}(m,n)$ represents the complete upper bound on the win probability of an \rectdim{m}{n} magic rectangle game with distinguished input and allowed nonsignaling correlation level $\Sigma$.
\end{proof}

\subsection{Performance: Noise tolerance and rates}

\begin{lemma}
\label{lem:rgen_games}
    The magic rectangle games which can be used in the $R_{gen}$ protocol are those of sizes \rectdim{2}{n} and \rectdim{3}{n} where $n \geq 2$, along with their transposed counterparts.
\end{lemma}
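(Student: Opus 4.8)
The plan is to determine exactly which magic rectangle games have a strict gap between their quantum value $\omega$ and their quantum value with a distinguished input $\bar\omega$, since the $R_{gen}$ protocol (via \cref{thm:miller_noise}) requires $\bar\omega < \omega$ in order for the rate curve $\pi$ to be nonzero on a nonempty interval $(\bar\omega,\omega]$. First I would dispose of the cases where no superclassical behavior is available at all: by \cref{thm:1xn_is_classical} the \rectdim{1}{n} games (and by \cref{cor:value_symmetric} the \rectdim{m}{1} games) have $\omega_Q = \omega_L$, so they are useless for randomness expansion and can be excluded immediately. Thus it suffices to analyze \rectdim{m}{n} games with $m,n \geq 2$ using the formula $\bar\omega_{\Sigma}(m,n) = 1 - \frac{(m-1)(n-1)}{mn}[1-\omega_{\Sigma}(m-1,n-1)]$ from \cref{thm:distinguished_bound}.

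Next I would compute the gap $\omega_Q(m,n) - \bar\omega_Q(m,n)$ for $m,n\geq 2$, splitting into the regimes identified in \cref{thm:characterization}. For $m,n \geq 3$ we have $\omega_Q(m,n) = 1$ by \cref{item:3x3_and_larger}, so the question is whether $\bar\omega_Q(m,n) < 1$, i.e.\ whether $\omega_Q(m-1,n-1) < 1$. When $m-1,n-1 \geq 3$ (that is $m,n \geq 4$) we again have $\omega_Q(m-1,n-1)=1$, hence $\bar\omega_Q = 1 = \omega_Q$ and there is no gap — these games are excluded. When exactly one of $m-1,n-1$ equals $2$, say $m=3$ and $n \geq 4$, then $\omega_Q(2,n-1) < 1$ (by the Arkhipov planarity bound, \cite[Theorem~21]{arkhipov2012extending}, cited in \cref{sec:general_2xn_games}), so $\bar\omega_Q(3,n) < 1 = \omega_Q(3,n)$ and the game is usable; symmetrically for \rectdim{n}{3}. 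When $m-1 = n-1 = 2$, i.e.\ the \rectdim{3}{3} game, $\omega_Q(2,2) = (2+\sqrt2)/4 < 1$ by \cref{cor:2x2}, so $\bar\omega_Q(3,3) < 1$ and it is usable. Finally, for \rectdim{2}{n} games with $n \geq 2$: here $\omega_Q(2,n) < 1$ strictly (again by the planarity bound), while $\bar\omega_Q(2,n) = 1 - \frac{n-1}{2n}[1 - \omega_Q(1,n-1)] = 1 - \frac{n-1}{2n}\cdot\frac{1}{n-1} = 1 - \frac{1}{2n} = \omega_L(2,n)$, using $\omega_Q(1,n-1) = \omega_L(1,n-1) = 1 - 1/(n-1)$ from \cref{thm:1xn_is_classical}. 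Since $\omega_Q(2,n) > \omega_L(2,n) = \bar\omega_Q(2,n)$ (strict because the \rectdim{2}{2} lower bound $1 - (2-\sqrt2)/(2n)$ already exceeds $1 - 1/(2n)$, and \cref{cor:value_increasing}), every \rectdim{2}{n} game has a genuine gap and is usable.

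Collecting these cases: the games with $\omega_Q > \bar\omega_Q$ are precisely the \rectdim{2}{n} games for $n \geq 2$, the \rectdim{3}{n} games for $n \geq 3$ and — after transposing — the \rectdim{n}{3} games, while \rectdim{m}{n} with $m,n\geq 4$ and all \rectdim{1}{n}/\rectdim{m}{1} games are excluded. Noting that \rectdim{2}{n} with $n\geq2$ already covers \rectdim{2}{2}=\rectdim{3}{3}'s neighbor structure and that \rectdim{3}{2} is the transpose of \rectdim{2}{3}, this matches the claimed list of \rectdim{2}{n} and \rectdim{3}{n} games with $n\geq2$ together with their transposes (the \rectdim{3}{2} case being subsumed as the transpose of \rectdim{2}{3}, and \rectdim{3}{3} being self-transpose). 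I expect the main obstacle to be bookkeeping: one must carefully apply the transpose symmetry of \cref{cor:value_symmetric} to avoid double-counting or omitting a boundary case, and one must be careful that the strictness of each inequality is genuinely justified — in particular the strict inequality $\omega_Q(2,n)<1$ rests on the external planarity result rather than on anything proved from scratch here, so I would make sure that citation is invoked explicitly wherever strictness at the $m,n=2,3$ boundary is needed.
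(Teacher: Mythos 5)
Your proposal is correct and follows essentially the same route as the paper: exclude \rectdim{1}{n} games via \cref{thm:1xn_is_classical}, show $\bar{\omega}_Q = \omega_Q = 1$ for $m,n\geq 4$ via \cref{thm:distinguished_bound}, establish the gap for \rectdim{2}{n} by comparing $\bar{\omega}_Q(2,n)$ with the lower bound \cref{eq:2xn_lower_bound}, establish it for \rectdim{3}{n} via the Arkhipov planarity bound $\omega_Q(2,n-1)<1$, and sweep up the transposed cases by symmetry. Note in fact that your value $\bar{\omega}_Q(2,n) = 1 - \tfrac{1}{2n}$ is the correct evaluation of \cref{eq:distinguished_bound} (consistent with \cref{tab:randomness_expansion_performance}), whereas the paper's proof writes $1-\tfrac{1}{n}$, an apparent typo that does not affect the conclusion.
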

\begin{proof}
    We know from \cref{thm:1xn_is_classical} that \rectdim{1}{n} games do not exhibit superclassical behavior, and so cannot be used for randomness expansion.
    By \cref{thm:miller_noise}, then, we seek \rectdim{m}{n} games with $m,n \geq 2$ for which $\bar{\omega}_{Q}(m,n) < \omega_{Q}(m,n)$.
    This is clearly not the case for $m,n > 3$, since $\omega_{Q}(m,n) = 1$ for $m,n \geq 3$, and substituting this into \cref{eq:distinguished_bound} of \cref{thm:distinguished_bound} yields $\bar{\omega}_{Q}(m,n) = 1$ for $m,n > 3$.
    Thus $\omega_{Q}(m,n) = \bar{\omega}_{Q}(m,n)$ for $m,n > 3$.
    It remains to show that \rectdim{2}{n} games for $n \geq 2$ and \rectdim{3}{n} games for $n \geq 3$ can be used in $R_{gen}$.
    Then, the symmetry in $\omega_{Q}(m,n)$ provided by \cref{lem:value_symmetric} (and inherited by $\bar{\omega}_{Q}(m,n)$ through \cref{eq:distinguished_bound}) shows that games with transposed dimensions to those may also be used.

    Consider the \rectdim{2}{n} games for $n \geq 2$.
    Using \cref{thm:1xn_is_classical} in \cref{eq:distinguished_bound} gives
    \begin{equation}
        \bar{\omega}_{Q}(2,n) = 1 - \frac{1}{n} < \omega_{Q}(2,n) ,
    \end{equation}
    where the final inequality is established by comparing with \cref{eq:2xn_lower_bound}.
    Now consider the \rectdim{3}{n} games for $n \geq 3$.
    As in \cref{sec:general_2xn_games}, from \cite{arkhipov2012extending} we have the upper bound $\omega_{Q}(2,n-1) < 1$.
    Again substituting into \cref{eq:distinguished_bound} of \cref{thm:distinguished_bound}, we get
    \begin{equation}
        \bar{\omega}_{Q}(3,n) < 1 = \omega_{Q}(3,n) ,
    \end{equation}
    where the final equality uses \cref{cor:value_increasing}.
\end{proof}

For the magic rectangle games which can be used in the protocol $R_{gen}$ (shown in \cref{lem:rgen_games}), \cref{thm:miller_noise} results in a maximum noise tolerance of
\begin{equation}
\label{eq:max_noise_tol}
    \rho_{m,n}^{\text{max}} = \omega_{Q}(m,n) - \bar{\omega}_{Q}(m,n) .
\end{equation}
Furthermore, combining \cref{thm:miller_noise} with the universal lower bound of \cref{thm:miller_rate} shows that $R_{gen}$ produces (asymptotically in the number of protocol rounds) quantum-secure extractable bits at a rate of at least
\begin{equation}
    \pi(\chi) = \frac{2(\log_{2}{e})(\chi - \bar{\omega})^{2}}{r-1}
\end{equation}
per round, where $\chi \in (\bar{\omega}, \omega]$, and $r \geq 2$ is the total size of the output alphabet for the game.
According to \cref{rule:alice_prod,rule:bob_prod}, a magic rectangle game of dimension \rectdim{m}{n} has $2^{m-1} \cdot 2^{n-1}$ possible outputs.
Substituting the result of \cref{thm:distinguished_bound} for $\bar{\omega}$, this lower bound on the rate can be written for \rectdim{m}{n} magic rectangle games as
\begin{equation}
\label{eq:magic_rectangle_rate}
    \pi_{m,n}(\chi) = \frac{2(\log_{2}{e})[\chi - \bar{\omega}_{Q}(m,n)]^{2}}{2^{m+n-2} - 1} ,
\end{equation}
where $\bar{\omega}_{Q}(m,n)$ is as given in \cref{eq:distinguished_bound}.
The maximum possible lower bound that \cref{thm:miller_rate} can achieve for the rate then occurs when the score acceptance threshold is set to its maximum $\chi = \omega_{Q}(m,n)$, such that there is no tolerance to noise, and is given by
\begin{equation}
\label{eq:max_rate}
    \pi_{m,n}^{\text{max}}
    = \pi_{m,n}(\omega_{Q}(m,n))
    = \frac{2(\log_{2}{e})(\rho_{m,n}^{\text{max}})^{2}}{2^{m+n-2} - 1} .
\end{equation}
While this lower bound has the advantage that it only depends only on the dimension of the magic rectangle used, it gives rates that are far from optimal.
More practical lower bounds on the rate for the spot-checking protocol could, for example, be calculated based on the techniques of \cite{arnon2019simple}, or numerically as in \cite{brown2020framework}.

The noise tolerance for the CHSH game, or equivalently the \rectdim{2}{2} magic square game (\cref{thm:2x2_chsh_equiv}), is already known to be $(\sqrt{2} - 1)/4 \approx 10.4\%$, and this is confirmed by \cref{eq:max_noise_tol}.
Combining our characterization of magic rectangle games from \cref{sec:characterization} with the result of \cref{thm:distinguished_bound}, we summarize the performance of all viable magic rectangle games in \cref{tab:randomness_expansion_performance}.
Since the exact quantum values of the \rectdim{2}{2} and \rectdim{3}{3} games are known, inserting \cref{eq:distinguished_bound} of \cref{thm:distinguished_bound} into \cref{eq:max_noise_tol} gives exactly the optimal noise tolerance for $R_{gen}$ using the \rectdim{3}{3} game.
Hence, the \rectdim{3}{3} noise tolerance stated in \cref{tab:randomness_expansion_performance} is exact.
\begin{table}[htb]
    \centering
    \small
    \caption{
        All \rectdim{m}{n} magic rectangle games which can produce quantum-secure extractable bits in the spot-checking protocol.
        A selection of specific examples are given in the lower half of the table.
        Bounds shown for the maximum attainable noise tolerance of \rectdim{2}{n} and \rectdim{3}{n} games are given based on upper and lower bounds for the \rectdim{2}{n} quantum value (see \cref{sec:general_2xn_games}).
        Corresponding bounds are displayed for the maximal universal lower bound on the rate, as given by \cref{eq:max_rate}.
        For \rectdim{2}{2} and \rectdim{3}{3} games, upper and lower bounds coincide, so their optimal noise tolerance is exact.
        The \rectdim{3}{n} lower bounds shown for $n \geq 8$ are based on \cref{conj:2xn_upper_bound}.
        The \rectdim{2}{n} upper bounds for $n \geq 7$ are also based on \cref{conj:2xn_upper_bound}, but may be more weakly bound as in \cref{eq:weak_2xn_bounds}.
        }
    \label{tab:randomness_expansion_performance}
    \begin{threeparttable}
    \begin{tabular}{ccccc}
        \toprule
        & \multicolumn{2}{c}{Noise tolerance $\rho_{m,n}^{\text{max}}$} & \multicolumn{2}{c}{Rate bound $\pi_{m,n}^{\text{max}}$ (\si{\bit \per round})\tnotex{tn:rates}} \\
        \cmidrule(lr){2-3}\cmidrule(l){4-5}
        \rectdim{m}{n} & Upper bound & Lower bound & Upper bound & Lower bound \\
        \midrule
        \rectdim{2}{2} & $\frac{1}{4}{\left(\sqrt{2}-1\right)} \approx 10.4\%$ & $\frac{1}{4}{\left(\sqrt{2}-1\right)} \approx 10.4\%$ & $\approx 0.01031$ & $\approx 0.01031$ \\
        \addlinespace
        \rectdim{3}{3} & $\frac{1}{9}{\left(2-\sqrt{2}\right)} \approx 6.5\%$ & $\frac{1}{9}{\left(2-\sqrt{2}\right)} \approx 6.5\%$ & $\approx 0.00081$ & $\approx 0.00081$ \\
        \addlinespace
        \rectdim{2}{n}
        & $\frac{1}{2}{\left[\sqrt{1-\frac{1}{n}}-\left(1-\frac{1}{n}\right)\right]}$ & $\frac{1}{2n}{\left(\sqrt{2}-1\right)}$ & $\frac{\left(\sqrt{n(n-1)}+1-n\right)^{2}}{2(2^{n}-1)n^{2}\ln{2}}$ & $\frac{3-2\sqrt{2}}{2(2^{n}-1)n^{2}\ln{2}}$ \\
        \addlinespace
        \rectdim{3}{n}
        & $\frac{1}{3n}{\left(2-\sqrt{2}\right)}$ & $\frac{1}{3}{\Big(1-\frac{1}{n}\Big)}{\Big(1-\sqrt{1-\frac{1}{n-1}}\Big)}$ & $\frac{4(3-2\sqrt{2})}{9(2^{n+1}-1)n^{2}\ln{2}}$ & $\frac{2(n-1)(\sqrt{n-2}-\sqrt{n-1})^{2}}{9(2^{n+1}-1)n^{2}\ln{2}}$ \\
        \midrule
        \rectdim{2}{3}
        & $\frac{1}{6}{\left(\sqrt{6}-2\right)} \approx 7.5\%$ & $\frac{1}{6}{\left(\sqrt{2}-1\right)} \approx 6.9\%$ & $\approx 0.00231$ & $\approx 0.00196$ \\
        \addlinespace
        \rectdim{2}{4}
        & $\frac{1}{8}{\left(2\sqrt{3}-3\right)} \approx 5.8\%$ & $\frac{1}{8}{\left(\sqrt{2}-1\right)} \approx 5.2\%$ & $\approx 0.00065$ & $\approx 0.00052$ \\
        \addlinespace
        \rectdim{3}{4}
        & $\frac{1}{12}{\left(2-\sqrt{2}\right)} \approx 4.9\%$ & $\frac{1}{12}{\left(3-\sqrt{6}\right)} \approx 4.6\%$ & $\approx 0.00022$ & $\approx 0.00020$ \\
        \addlinespace
        \rectdim{3}{5}
        & $\frac{1}{15}{\left(2-\sqrt{2}\right)} \approx 3.9\%$ & $\frac{2}{15}{\left(2-\sqrt{3}\right)} \approx 3.6\%$ & $\approx 0.00007$ & $\approx 0.00006$ \\
        \bottomrule
    \end{tabular}
    \begin{tablenotes}
        \footnotesize
        \item[a] \label{tn:rates} These rates found from \textcite{miller2017universal} depend only on the dimension of magic rectangle game used. More practical rates could be calculated using the techniques of \cite{arnon2019simple,brown2020framework}.
    \end{tablenotes}
    \end{threeparttable}
\end{table}

It is important to note that, in \cref{tab:randomness_expansion_performance}, the upper bounds given for the noise tolerance and rate of \rectdim{2}{n} games where $n \geq 7$ are calculated based on our \cref{conj:2xn_upper_bound}, that \cref{eq:2xn_upper_bound} holds for all such $n$.
However, by trivially weakening \cref{eq:2xn_upper_bound} to $\omega_{Q}(2,n) \leq 1$, we can still find less strict upper bounds for these quantities which must hold.
Inputting this relaxation into \cref{eq:max_noise_tol,eq:max_rate}, we arrive at
\begin{equation}
\label{eq:weak_2xn_bounds}
    \rho_{2,n}^{\text{max}} \leq \frac{1}{2n} ,\quad
    \pi_{2,n}^{\text{max}} \leq \frac{\log_{2}{e}}{2n^{2}(2^{n}-1)} .
\end{equation}
These expressions are also strictly decreasing with $n$ and, for the conjectural cases of $n \geq 7$, do not exceed the upper bounds for the \rectdim{2}{3} game given in \cref{tab:randomness_expansion_performance}.

\section{Discussion}
\label{sec:conclusion}

In this work, we defined a class of nonlocal games which we called ``magic rectangles'', since they are natural generalizations of the \textcite{mermin1990simple,peres1990incompatible} magic square.
As a first point for future work, it would be interesting to further generalize our games to the multipartite scenario, in which players would output by filling $(d-1)$-dimensional slices of a ``magic hyperrectangle'' of $d$ dimensions.
By characterizing a suitable generalization of this kind, it may also be possible to identify other well-known nonlocal games as special cases.

Our main results can be divided into two parts.
Firstly, we obtained a full characterization of magic rectangle games with respect to the winning probabilities of quantum and classical strategies.
Secondly, we focused on one important application, namely certified randomness expansion; we demonstrated how a complete characterization can be used to explore the potential for device-independent protocols of all the family of nonlocal games we introduced.
We will discuss these two parts separately, giving future directions for each.

\subsection{Characterization}

We obtained a complete characterization of magic rectangle games.
We have shown that \rectdim{1}{n} games cannot exhibit superclassical behavior.
Moreover, any magic rectangle game of at least size \rectdim{3}{3} can be won with certainty using quantum or stronger correlations.
For these games, the interesting properties of strong contextuality and implementation with only Clifford computations of the regular magic square game are preserved.
We have also shown that the special case of dimension \rectdim{2}{2} is identical to the CHSH game, which is well studied and does not exhibit the aforementioned properties.

Finally, the class of \rectdim{2}{n} games for $n \geq 3$ is seen to exhibit the richest behavior: there do not exist perfect quantum winning strategies for these games, however, we have shown superclassical lower bounds on their optimal success probabilities using quantum correlations.
We have also given numerical upper bounds on quantum win probabilities for these games with small $n$, and conjectured a closed-form expression extending to all $n$.
An interesting consequence of our analysis of \rectdim{2}{n} magic rectangle games is that they provide examples of nonlocal games that can be won with certainty using NPA level 1 correlations, and yet for which no quantum (or numerically almost quantum) winning strategy exists (see also \cref{sec:2x3_at_npa1} for an example).

\paragraph{Future works.}
An interesting future direction is to closer examine this special class of \rectdim{2}{n} magic rectangles.
The problem of finding optimal quantum values is still an open question, where the possibilities that they coincide with our lower bounds, upper bounds, or something between all have interesting implications.
In the first case, optimal strategies could be implemented using CHSH sub-games.
Games of the third case would outperform the CHSH game while also exhibiting a separation between the quantum and almost quantum sets.
We believe the second case, in which the quantum and almost quantum sets coincide for each magic rectangle, to be the most likely.
This would provide further evidence of the naturality of almost quantum correlations.
Once specific strategies (for games beyond CHSH) have been obtained, one could directly see how these perform for various device-independent cryptographic primitives or self-testing.

\subsection{Certified randomness expansion}
\label{sec:randomness_discussion}

The optimal noise tolerance of an \rectdim{m}{n} magic rectangle game for certified randomness expansion in the spot-checking protocol is fully determined by the difference of the optimal quantum win probability $\omega_Q(m,n)$ and the optimal quantum win probability with distinguished input $\bar{\omega}_{Q}(m,n)$.
In \cref{thm:distinguished_bound}, we relate $\bar{\omega}_{Q}(m,n)$ with $\omega_{Q}(m-1,n-1)$, and given that we have characterized the quantum win probabilities for magic rectangle games of all dimensions in \cref{thm:characterization}, we can obtain the noise tolerance of all magic rectangle games (\cref{tab:randomness_expansion_performance}).
Specifically, the noise tolerance of an \rectdim{m}{n} is given as the difference between its quantum value, and the corresponding value of the \rectdim{(m-1)}{(n-1)} game extended to dimension \rectdim{m}{n} by including in each of its outputs a deterministic entry.
It follows that only magic rectangle games of dimension \rectdim{2}{n} and \rectdim{3}{n}, with $n \geq 2$ can be used for certified randomness expansion (larger rectangle games fail, since the games can be won with certainty even with a distinguished input).
Moreover, we can also see from \cref{tab:randomness_expansion_performance} that the most robust game turns out to be the \rectdim{2}{2} magic square game (which we showed is equivalent to the CHSH game).
The values given for general \rectdim{2}{n} and \rectdim{3}{n} games are strictly decreasing with $n$ and, furthermore, of these only the \rectdim{2}{2} and \rectdim{2}{3} games outperform the noise tolerance and rate bound given for the \rectdim{3}{3} game.

From the equivalence with the CHSH game, optimal strategies for the \rectdim{2}{2} game can be implemented using only a single Bell state shared between the players, whereas all known implementations of optimal strategies for the \rectdim{3}{3} game require a system of at least two Bell states.
However, implementations of certain winning \rectdim{3}{3} strategies may still be advantageous, for example in cases where physical limitations on the quantum devices dictate certain additional constraints (such as requiring the use of only Clifford gates), or in the context of self-testing (where the use of pairs of Bell states enables parallel self-testing).

\paragraph{Future works.}
An important remaining question is that of the optimal rates that one can achieve with magic rectangle games.
Since we showed that, in terms of noise tolerance, the optimal game coincides with the CHSH game, analysis of the rates has already been done extensively.
However, it is still an interesting problem to obtain rates for all the games (whether this is because one is interested in a specific game, or because a protocol may provide better rates with worse noise tolerance---something conceivably possible).

Note that in \cref{tab:randomness_expansion_performance} we do give some rates for all the different games.
\Cref{thm:miller_rate} directly relates noise tolerance to a lower bound on the rate of randomness expansion, which we can (and do) use to directly obtain indicative rates (\cref{tab:randomness_expansion_performance} last column).
However, we would like to stress that the rates obtained from this expression (unlike our noise tolerance analysis) are far from optimal.
More practical rates can be calculated, for example, by referring to the techniques outlined in \cite{arnon2019simple}, or numerically as in \cite{brown2020framework}.
To obtain these improved rates requires an involved, case-by-case analysis that treats each magic rectangle game separately, something that is sensible to do if one is interested in a given game, and is left for future publications.

\section*{Acknowledgments}
We would like to thank Matty J. Hoban for useful discussions.
S.A.A. gratefully acknowledges EPSRC studentship funding under grant number EP/R513209/1.

\appendix

\section{Winning 2-by-3 games at NPA level 1}
\label{sec:2x3_at_npa1}

Consider the \rectdim{2}{3} magic rectangle game in which entries to the first column are required to have a negative product, and all other row and column products are required to be positive.
That is, the \rectdim{2}{3} game specified by the parameters $(\beta_{1},\beta_{2},\beta_{3}) = (-,+,+)$ and $(\alpha_{1},\alpha_{2}) = (+,+)$ satisfying \cref{def:magic_rectangle}.
In order to write our strategy more easily, in \cref{tab:2x3_biject} we introduce a more concise alphabet for the inputs and outputs of the game.
\begin{table}[htb]
    \centering
    \small
    \caption{
        The natural alphabets $\mathscr{A}$ and $\mathscr{B}$ defined here denote new notation for the natural alphabets of the \rectdim{2}{3} magic rectangle game under consideration, with parameters $(\alpha_{1},\alpha_{2}) = (+,+)$ and $(\beta_{1},\beta_{2},\beta_{3}) = (-,+,+)$.
        Elements of each alphabet have the form of input/output pairs for each player, with the input written first.
        }
    \label{tab:2x3_biject}
    \begin{tabular}{cccc}
        \toprule
        $\mathscr{A}_{2 \times 3}$ & $\mathscr{A}$ & $\mathscr{B}_{2 \times 3}$ & $\mathscr{B}$ \\
        \cmidrule(r){1-2}\cmidrule(l){3-4}
        $(1, (+,+,+))$ & $(1,1)$ & $(1, (+,-)^{\transpose})$ & $(1,1)$ \\
        $(1, (+,-,-))$ & $(1,2)$ & $(1, (-,+)^{\transpose})$ & $(1,2)$ \\
        $(1, (-,+,-))$ & $(1,3)$ & $(2, (+,+)^{\transpose})$ & $(2,1)$ \\
        $(1, (-,-,+))$ & $(1,4)$ & $(2, (-,-)^{\transpose})$ & $(2,2)$ \\
        $(2, (+,+,+))$ & $(2,1)$ & $(3, (+,+)^{\transpose})$ & $(3,1)$ \\
        $(2, (+,-,-))$ & $(2,2)$ & $(3, (-,-)^{\transpose})$ & $(3,2)$ \\
        $(2, (-,+,-))$ & $(2,3)$ & & \\
        $(2, (-,-,+))$ & $(2,4)$ & & \\
        \bottomrule
    \end{tabular}
\end{table}

Under the new notation defined in \cref{tab:2x3_biject}, the success probability of a behavior $P(a,b \mid x,y)$ where $(x,a) \in \mathscr{A}$ and $(y,b) \in \mathscr{B}$ is
\begin{equation}
\label{eq:2x3_winprob}
\begin{split}
    p = \tfrac{1}{6}
        [ & P(1,1 \mid 1,1) + P(2,1 \mid 1,1) + P(3,2 \mid 1,1) + P(4,2 \mid 1,1) \\
    {}+{} & P(1,1 \mid 1,2) + P(2,2 \mid 1,2) + P(3,1 \mid 1,2) + P(4,2 \mid 1,2) \\
    {}+{} & P(1,1 \mid 1,3) + P(2,2 \mid 1,3) + P(3,2 \mid 1,3) + P(4,1 \mid 1,3) \\
    {}+{} & P(1,2 \mid 2,1) + P(2,2 \mid 2,1) + P(3,1 \mid 2,1) + P(4,1 \mid 2,1) \\
    {}+{} & P(1,1 \mid 2,2) + P(2,2 \mid 2,2) + P(3,1 \mid 2,2) + P(4,2 \mid 2,2) \\
    {}+{} & P(1,1 \mid 2,3) + P(2,2 \mid 2,3) + P(3,2 \mid 2,3) + P(4,1 \mid 2,3) ] .
\end{split}
\end{equation}
We now state a behavior, achievable using NPA level 1 correlations, for which the win probability $p$ of \cref{eq:2x3_winprob} is unity.
This behavior is defined via the matrices
\begin{subequations}
\label{eq:2x3_behavior}
\begin{align}
    (P(a,b \mid 1,1))_{a,b} &= \frac{1}{4}
        \begin{pmatrix}
            1 & 0 \\
            1 & 0 \\
            0 & 1 \\
            0 & 1
        \end{pmatrix} , \\
    (P(a,b \mid 2,1))_{a,b} &= \frac{1}{4}
        \begin{pmatrix}
            0 & 1 \\
            0 & 1 \\
            1 & 0 \\
            1 & 0
        \end{pmatrix} , \\
    (P(a,b \mid 1,2))_{a,b} = (P(a,b \mid 2,2))_{a,b} &= \frac{1}{4}
        \begin{pmatrix}
            1 & 0 \\
            0 & 1 \\
            1 & 0 \\
            0 & 1
        \end{pmatrix} , \\
    (P(a,b \mid 1,3))_{a,b} = (P(a,b \mid 2,3))_{a,b} &= \frac{1}{4}
        \begin{pmatrix}
            1 & 0 \\
            0 & 1 \\
            0 & 1 \\
            1 & 0
        \end{pmatrix} .
\end{align}
\end{subequations}
Indeed, the behavior defined by \cref{eq:2x3_behavior} admits an NPA hierarchy level 1 certificate, given by the matrix
\begin{equation}
    \Gamma = \frac{1}{8}
        \begin{pmatrix}
            8  &  2  &  2  &  2  &  2  &  2  &  2  &  4  &  4  &  4 \\
            2  &  2  &  0  &  0  &  1  & -1  &  1  &  2  &  2  &  2 \\
            2  &  0  &  2  &  0  & -1  &  1  &  1  &  2  &  0  &  0 \\
            2  &  0  &  0  &  2  &  1  &  1  &  1  &  0  &  2  &  0 \\
            2  &  1  & -1  &  1  &  2  &  0  &  0  &  0  &  2  &  2 \\
            2  & -1  &  1  &  1  &  0  &  2  &  0  &  0  &  0  &  0 \\
            2  &  1  &  1  &  1  &  0  &  0  &  2  &  2  &  2  &  0 \\
            4  &  2  &  2  &  0  &  0  &  0  &  2  &  4  &  2  &  2 \\
            4  &  2  &  0  &  2  &  2  &  0  &  2  &  2  &  4  &  2 \\
            4  &  2  &  0  &  0  &  2  &  0  &  0  &  2  &  2  &  4
        \end{pmatrix} .
\end{equation}
By \cref{cor:value_unique}, we thus have that $\omega_{1}(2,3) = 1$.
Therefore, by \cref{cor:value_increasing}, $\omega_{1}(2,n) = 1$ for all $n \geq 3$.

\printbibliography

\end{document}